\documentclass{article}

\usepackage{amsthm}
\usepackage{float}

\usepackage{url}
\usepackage{graphicx}
\usepackage[lined, ruled]{algorithm2e}
\usepackage{algpseudocode}
\usepackage{epstopdf}
\usepackage{color}
\usepackage{booktabs}
\usepackage{subfigure}
\usepackage{epsfig}
\usepackage{comment}
\usepackage{cases}
\usepackage{float}
\usepackage{tabularx}
\usepackage[T1]{fontenc}
\usepackage[utf8]{inputenc}
\usepackage{upgreek}
\usepackage{marginnote}
\usepackage{fullpage}
\usepackage{scrextend}

\makeatletter
\let\oldmarginnote\marginnote
\renewcommand*{\marginnote}[1]{%
   \begingroup%
   \ifodd\value{page}
     \if@firstcolumn\reversemarginpar\fi
   \else
     \if@firstcolumn\else\hspace{3cm}\reversemarginpar\fi
   \fi
   \oldmarginnote{#1}%
   \endgroup%
}
\makeatother

\ifpdf
    \DeclareGraphicsExtensions{.jpg,.pdf,.mps,.png}
\else
    \usepackage{hyperref}
    \DeclareGraphicsExtensions{.eps,.ps,.bmp,.tif,.tiff,.tga}
\fi
\graphicspath{{fig/}}

\usepackage{xcolor}

\newsavebox{\mybox}
\newlength{\mydepth}
\newlength{\myheight}
{\begin{lrbox}{\mybox}\begin{minipage}{\columnwidth}}%
{\end{minipage}\end{lrbox}%
 \settodepth{\mydepth}{\usebox{\mybox}}%
 \settoheight{\myheight}{\usebox{\mybox}}%
 \addtolength{\myheight}{\mydepth}%
 \noindent\makebox[0pt]{\hspace{-20pt}\rule[-\mydepth]{3pt}{\myheight}}%
 \usebox{\mybox}}

\newcommand{\Let}[2]{#1 $\leftarrow$ #2}

\newtheorem{proposition}{Proposition}
\newtheorem{corollary}{Corollary}

\renewcommand{\Comment}[1]{%
  {\textcolor{gray}{$\triangleright$\emph{#1}}}
}

\newcommand{\name}{\textit{{MAP-Me }}}
\newcommand{\nameu}{\textit{{MAP-Me-IU }}}

\newcommand{\namenosp}{\textit{{MAP-Me}}}
\newcommand{\nameunosp}{\textit{{MAP-Me-IU}}}

\begin{document}

\title{MAP-Me: Managing Anchor-less Producer Mobility in Information-Centric Networks}
\date{}

\author{
  Jordan Augé\thanks{Cisco Systems},
  Giovanna Carofiglio\footnotemark[1],
  Giulio Grassi\thanks{LiP6 / UPMC},\\
  Luca Muscariello\footnotemark[1],
  Giovanni Pau\footnotemark[2]~\thanks{UCLA},
  Xuan Zeng\thanks{IRT SystemX}~\footnotemark[2]
}
\maketitle

\begin{abstract}
	Mobility has become a basic premise of network communications, thereby requiring a native integration into 5G networks. Despite the numerous efforts to propose and to standardize effective mobility management models for IP, the result is a complex, poorly flexible set of mechanisms. 
	
	The natural support for mobility offered by ICN (Information Centric Networking), makes it a good candidate to define a radically new solution relieving limitations of traditional approaches. If consumer mobility is supported in ICN by design, in virtue of its connectionless pull-based communication model, producer mobility is still an open challenge.
	
	In this work, we propose \namenosp, an anchor-less solution to manage micro mobility of content producer via ICN name-based data plane, with support for latency-sensitive applications. First, we analyze \name performance and provide guarantees of correctness and stability.
	Further, we set up a realistic simulation environment in NDNSim 2.1 for \name evaluation and comparison against existing solutions: either random waypoint and trace-driven car mobility patterns are considered under 802.11 radio access. 
	Results are encouraging and highlight the superiority of \name in terms of user performance and of network cost metrics.
\end{abstract}

\section{Introduction}
 
With the phenomenal spread of portable user devices, mobility has become a basic
requirement for almost any communication netwteork as well as a compelling feature to integrate in next generation networks (5G).
The need for a mobility management paradigm to apply within IP networks has striven a lot of efforts in research and standardization bodies (IETF, 3GPP among others), all resulting in a complex access-dependent set of mechanisms implemented via a dedicated control infrastructure. 
The complexity and lack of flexibility of such approaches (e.g. Mobile IP) calls today for a radically new solution dismantling traditional assumptions like tunneling and anchoring of all mobile communications into network core. 

The Information Centric Network (ICN) paradigm brings native support for mobility, security and storage within the network architecture, hence emerging as a promising 5G technology candidate. Specifically on mobility management, ICN  has the potential to relieve  limitations of current approaches by leveraging its primary feature, the redefinition of  packet forwarding based on \emph{names} rather than on \emph{network addresses}. 
We believe that removing the dependence on location identifiers is a first step in the direction of removing the need for any anchoring of communications into fixed network nodes, which may considerably simplify and improve mobility management.

As a direct result of ICN principles, consumer mobility is naturally supported by design: a change in  physical location for the consumer does not translate into a change in the data plane like for IP. The retransmissions of requests for not yet received data by the consumers take place without involving any signaling to the network.

Producer mobility and realtime group communications present more challenges, depending on frequency of movements, on latency requirements and on content lifetime. 

Tackling such problem, in a simple and effective way by exploiting ICN key characteristics, is at the core of this paper. Previous attempts have been made in ICN literature to go beyond traditional IP approaches, by using existing ICN request/data packet structures to trace producer movements and to dynamically build a reverse forwarding path (see \cite{NDN-survey} for a survey). They still rely on a stable home address to inform about producer movements (e.g. \cite{zhang2014kite}) or on buffering of incoming requests at the producer's previous point of attachment (e.g. \cite{ANC_Mobilitysupport}), which prevents support for latency-sensitive applications. 

In this paper, we aim to take one step forward in the definition of a name-based mechanism operating in the forwarding plane and completely removing any anchoring, while striving for latency minimization. 

The main contribution of this work is a proposal for an anchor-less mobility management mechanism, named \namenosp,  with the following characteristics:
\begin{itemize}
\item \name~addresses micro (e.g. intra Autonomous Systems) producer mobility. Addressing macro mobility as well as the interaction with network routing are a \emph{non-goal} of this paper, left for future work.
\item \name~does not rely on global routing updates which would be too slow and too costly, rather works at a faster timescale propagating forwarding updates and  leveraging realtime notifications left as breadcrumbs by the producer to enable live tracking of its position\footnote{For simplicity, we use the word \textit{producer} in place of the more correct expression \textit{producer name prefixes} }.
The main goal being the support for both high-speed mobility and realtime group applications like Periscope~\cite{Periscope}. 
\item \name~leverages core ICN features like stateful forwarding, dynamic and distributed Interest load balancing to update the forwarding state at routers, relaying former and current producer locations.
\item \name~is designed to be access-agnostic, in order to cope with highly heterogeneous wireless access and multi-homed/mobile users.
\item Low overhead in terms of signaling, additional state at routers and computational complexity are also targeted in the design to provide a solution able to scale to large and dynamic mobile networks.
\end{itemize}
 
To evaluate this proposal, we first contribute an analysis of protocol correctness and guarantees, then we provide a realistic simulation environment in NDNSim 2.1~\cite{ndnsim}, where we compare it against ideal Global Routing with instantaneous updates, anchor-based and tracing-based solutions over a set of random waypoint and trace-driven mobility patterns representing V2I scenarios based on 802.11 radio access.
Results show that \name outperforms existing alternatives by either user performance (e.g. loss, delays) and network cost (e.g. signaling overhead, link utilization) metrics.

The rest of the paper is organized as follows: Sec.\ref{sec:relwork} discusses different classes of mobility management approaches for ICN, discussing pros and cons of existing solutions. In Sec.\ref{sec:approach}, we introduce the design principles and detail \name~ operations, before analyzing its correctness and path stretch guarantees in Sec.\ref{sec:analysis}.   
The NDNSim evaluation is gathered in Sec.\ref{sec:evaluation}.
Finally, Sec.\ref{sec:conclusions} concludes the paper.

%

\section{Related Work}
\label{sec:relwork}
In this paper we primarily focus on Named-Data Networking (NDN) architecture~\cite{VANNDN-Conext}, though the proposed mobility management model has broader applicability within ICN designs. In the implemented ICN protocol stack, we consider the jointly optimal multipath Interest control and distributed dynamic forwarding derived in \cite{ICNP, Nomen13}, while node mobility is emulated either via classical mobility models (e.g. Random Way Point) and via trace-driven car patterns~\cite{Navigo}.

\subsection{Mobility management in ICN}

Many efforts have been devoted to define mobility management models for IP networks in the last two decades, resulting in a variety of complex, often not implemented, proposals. A good survey of these approaches is RFC 6301 \cite{RFC6301}. A first survey of ICN mobility management approaches is provided in \cite{NDN-survey}. Authors distinguish three categories of solutions -- routing, mapping and tracing -- based on the type of indirection point (also called Rendez-Vous, RV).
We build on such classification and refine it mainly to distinguish name-based approaches not relying upon the existence of a RV (anchor-less).
We classify mobility management solutions into:
\begin{enumerate}
	\item[\textbf{GR:}] \textit{Global Routing} solutions using the routing plane and requiring to update all network routers about movements of a mobile node;
	\vspace{-2mm}
	\item[\textbf{RB:}] \textit{Resolution-Based or mapping at routing-layer}, solutions involving a resolution of identifiers into locators to be performed at dedicated RV nodes (DNS-like infrastructure);
	\vspace{-2mm}
	\item[\textbf{AB:}] \textit{Anchor-Based or mapping at network-layer} solutions keeping the mapping at network-layer by using a stable home address (or anchor) as a RV. The anchor is kept aware of mobile node movements and is also responsible for tunneling packets to the new location;
	\vspace{-1mm}
	\item[\textbf{TB:}] \textit{Tracing-Based} solutions requiring the mobile node to create a hop-by-hop forwarding reverse path from its RV back to itself by propagating and keeping alive traces stored by all involved routers. Forwarding to the new location is enabled without tunneling; 
	\vspace{-2mm}
	\item[\textbf{AL:}] \textit{Anchor-Less} solutions requiring the mobile node to signal its movements to the network by propagating name-based messages that are kept by all involved routers to guarantee reachability at the new location without requiring any RV.
\end{enumerate}
\vspace{-1mm}
It is a shared concern that GR suffers from scalability issues even in IP. Thus, it may not be considered as a viable solution, in general, in presence of frequent mobility and especially in ICN where the naming space is even larger than in IP. 
In the rest of the paper, we use an ideal global routing approach instantaneously updating all routers' tables as an ideal reference for comparison with other approaches.

 \begin{table*}\label{tab:comparison}
  \centering
 	\begin{tabular}{|c|c|c|c|c|}
 		\hline
 		& Routing & Fwd & RV     & RV     \\
 		& plane   & plane      & signal.& tunnel.\\
 		
 		\hline
 		GR & x &   & x &   \\
 		RB & x &   & x &   \\
 		AB &   & x & x & x \\
 		TB &   & x & x &   \\
 		AL &   & x &  &    \\
 		\hline
 	\end{tabular} \\
 	\begin{tabular}{|c|c|c|c|c|c|c|}
 		\hline
 		& handoff& Path   & n. signal.& Fwd   & Robustness  & Reactivity to\\
 		& delay  & stretch& messages    & state & to failures & mob. freq. \\
 		
 		\hline
 		GR & igh & Low& High& Low& High& Low \\
 		RB & igh & Low& Medium & Low & Medium& Low \\
 		AB & Medium& Medium& Medium& Low& Low& Medium\\
 		TB & Medium& Medium& Medium& igh& Low& Medium\\
 		AL & Low& \textbf{Low}/Medium& \textbf{Low}/Med.& Medium& \textbf{igh}& Low/\textbf{High}\\
 		\hline
 	\end{tabular}
 	\caption{Summary of the comparison of the different classes of mobility protocols.}
 \end{table*}
 
\subsubsection{RB - Resolution-based}

RB proposals rely on the resolution of names into routable location identifiers: a DNS-like mapping system is updated every time the producer moves in the network and guarantees the correctness of the binding between content names and current producer location~\cite{RDV_ermans,RDV_Pursuit,RDV_Whatbenefits,RDV_SCOM,ANC_Mobilitysupport}. Once the resolution performed, packets can be correctly routed along the shortest path, with unitary path stretch (defined as the ratio between the realized path length over the shortest path one). Requiring  explicit  resolution, together with a  strict separation of names and locators, RB solutions involve a scalable ICN  routing infrastructure  able to leverage forwarding hints~\cite{RDV_Hermans, RDV_Whatbenefits}; however, scalability is achieved at the cost of a large handoff delay as evaluated e.g. in~\cite{ANC_Mobilitysupport} due to RV update and name resolution.  To summarize, RB solutions show good scalability properties and low stretch in terms of consumer to producer routing after the name resolution, but result to be unsuitable for frequent mobility and for reactive rerouting of latency-sensitive traffic. 

\subsubsection{AB - Anchor-based}

Anchor-based mobility solutions in ICN are mostly inspired by Mobile IP:  they assume the presence of an anchor the mobile producer is associated to, and sends updates to when he moves. The anchor is assumed to be on the path followed by requests to locate the producer, hence is responsible for intercepting and relaying consumer requests directed to the producer to its current location. 
Examples of anchor-based ICN proposals are~\cite{ANC_MobiCCN, ANC_Device} and one of the approaches compared in \cite{ANC_Mobilitysupport}.

For instance, in MobiCCN~\cite{ANC_MobiCCN}, mobility is handled via geographical routing and via the use of
anchors selected in virtue of the proximity in the hyperbolic space. Name resolution routers are distributed across the network in the form of
anchors, updated in case of mobility and relaying traffic in place of mobile producers. Anchor placement becomes critical 
for the performance of the entire approach. Unlike the above mentioned solution, in~\cite{ANC_Device} a change in the name is required to manage mobility in an anchor-based fashion, with bad consequences in terms of integrity verification, potential degradation in caching efficiency etc.  
All AB solutions have  better reactivity and good path stretch properties w.r.t. RB proposals at the cost of larger signaling overhead, single point of passage weakness (as observed in~\cite{ANC_Mobilitysupport}), preventing ICN multipath and limiting robustness to failure/congestion.

\subsubsection{TB - Tracing-based}

Tracing-based solutions assume data are published under a global stable RV prefix and aim to create a ``breadcrumb trail'' connecting current producer location to its stable RV location in order to   relay consumer Interests to the mobile producer.
Kite~\cite{zhang2014kite} leverages Interest notifications sent by the mobile producer to its RV to build a valid path to its current location via 
traces stored at all traversed routers into the PIT.
Traces have to be maintained over time by keep-alive messages. 
While it exploits ICN data plane features without requiring a separate control infrastructure, Kite involves a large signaling overhead 
due to keep-alive traces that are propagated and stored within a trace table within PIT. 
The idea of creating a reverse path to a stable home router is expressed also in \cite{PMC}, where authors propose a similar tracing-based approach, leveraging updates in FIB, rather than in PIT.

\subsubsection{AL - Anchor-less}

Anchor-less approaches are less common and introduced in ICN to enhance reactivity with respect to AB solutions by leveraging ICN name-based routing.  We can mention in this category the Interest Forwarding approach presented in \cite{ANC_Mobilitysupport} 
and the multicast based proposal in \cite{ALE_Seamless}.
For instance in the \emph{Interest Forwarding} scheme, the mobile producer sends a notification to its current point of attachment (PoA) before moving. 
The PoA starts buffering incoming Interests for the mobile producer until a forwarding update is completed and a new route built to reach the current location of the producer. Enhancement of such solution consider handover prediction.  Besides the potentially improved delay performance w.r.t. other categories of approaches, some drawbacks can be recognized: buffering of Interests may lead to timeouts for latency-sensitive applications and handover prediction is hard to perform in many cases. 

\textit{Remark}: It is worth observing that in-network caching and name-based routing in ICN also enable a routing to replica approach careless of producer movements (referred to as data depot in \cite{NDN-survey}).
Clearly, such approach is not suitable for realtime applications or targeted to unpopular content which may be replaced in cache due to memory limitations. 
A study of the advantages for popular items can be found in \cite{RDV_Whatbenefits}. 
A summary of the characteristics of each class of approaches is reported in Tab.\ref{tab:comparison}.
Based on such qualitative comparison of different families of solutions, we detail in the next section the design principles inspiring our proposal. \name belongs to AL class and aims to overcome limitations of existing alternatives by providing path stretch guarantees and especially targeting low handoff delay and high reactivity to changes (in \textbf{bold} in Tab.\ref{tab:comparison} we highlight the improvements of $\name$ within AL class).

\section{Design} \label{sec:approach}

In this section, we introduce \name, a micro mobility management protocol for
ICN networks. Based on the classification in the previous section, we detail here the
design principles inspiring \name.\\
We recall that we focus on an \textbf{anchor-less name-based layer-2 agnostic approach operating at ICN forwarding plane}.
In the quest for purely AL designs, we target a solution with the following additional characteristics:
\begin{description}
\vspace{-2mm}
\item[Transparent:] no differentiation between mobile consumer/ producer, no
required preliminary knowledge of handover, no name change.
The latter feature is important to avoid issues like triangular routing or
caching degradation.
\vspace{-2mm}
\item[Distributed:] we design \name to be fully distributed, involving routers at the edge of the network and so realizing  effective traffic offload close to end-users. Robustness issues like single point-of-passage problem due to centralized mobility management are prevented and multipath capabilities leveraged for efficient resource utilization. 
\vspace{-2mm}
\item[Lightweight:] we consider prefix granularity in updates, \\  rather than content or chunk granularity, 
in order to minimize signaling overhead and temporary state kept by in-network nodes.
\vspace{-2mm}
\item[Reactive:] we introduce network notifications and discovery mechanisms to support realtime producer tracking to support latency-sensitive communications.
\vspace{-2mm}
\item[Robust:] to network conditions (e.g. wireless or congestion losses and delays), by leveraging hop-by-hop retransmissions.
\end{description}

\subsection{\name description}

\name definition assumes the existence of a routing protocol responsible for creating/updating
the Forwarding Information Base (FIB) of all routers, possibly with multipath routes, and for managing network failures. \name~ builds on top of such stable routing configuration to dynamically handle producer mobility events via temporary FIB updates with the objective of minimizing producer unreachability. The rationale behind \name is to let the producer announce its movements to the network by sending a message to ``itself'', namely a message carrying its name prefix(es) that the network forwards according to the latest information stored in the FIBs of traversed routers.   Moreover, \name makes use of route versioning by carrying sequence numbers in mobility update messages. This allows to properly manage multiple route updates by taking into account their freshness.
Sequence numbers associated to a given route  can be stored as a regular field in the FIB or in additional entries, 
in a memory partition that we call \textit{Temporary FIB} (TFIB). Without loss of generality, we do not specify in this paper if and how
the routing protocol might interpret and exploit route versioning. We also assume that a producer makes use of a single
face to serve consumers.

The key idea of \name~is to avoid relying on a stable home address (as opposed to TB approaches for instance) and rather use name-based forwarding state created by ICN routing protocols or left by previous mobility updates, to switch FIBs on-the-fly to point to the correct new location of the producer. Let us describe the basic functioning of \namenosp, i.e. its mobility update protocol.

\subsection{Update protocol}

To illustrate \name update protocol, we consider an example with a single producer serving prefix \verb+/p+ and moving from position $P_0$ to position $P_1$. Network FIBs are assumed to be populated by a name-based routing protocol. \\
Without loss of generality we assume that routes' versioning  is set to the sequence number zero if the routing protocol
is capable of managing route versioning, otherwise a new temporary FIB entry would be used to manage \name~update messages.
Once the physical handover is completed, the producer sends an Interest Update (IU) with a positive sequence number (1 in this case) towards \verb+/p+ , i.e. towards its previous location according to routing-based FIB state. The sequence number is kept and updated at producer side. While propagating the IU, FIBs at intermediate routers, still pointing towards the previous location $P_0$, are updated with IU incoming face and the IU again forwarded in a hop-by-hop fashion.
IU process stops when the IU reaches $P_0$ and there is no next hop where to forward it.
Figure~\ref{fig:proposalIU} shows the state of the network during the update phase, after the IU has updated some routers and while
being propagated towards $P_0$. 
Basically, the update process consists in re-rooting the forwarding tree originally rooted in $P_0$ in the new producer location, $P_1$ in such example. We will analyse the properties of such process in Sec.\ref{sec:analysis}. 

\subsubsection{Concurrent updates and IU loss management}

It happens in practice that IU-traversed routers have different sequence numbers associated to IU prefix as a result of partial and concurrent updates; also IUs may be lost, e.g. on wireless channels.
We describe here the set of procedures related to the management of concurrent updates and of IU losses.

\textit{Concurrent updates}: To prevent inconsistencies due to concurrent updates (e.g. caused by frequent mobility) \name leverages the sequence number kept by the producer and added to subsequent IU messages. 
Modification of FIB entries is triggered only when the received IU carries a higher sequence number, while the information of an older IU is simply discarded and the same IU sent back to the originating interface with the updated sequence number. 
In case of equal sequence number (which may happen in case of multipath), the new face is added in FIB, without further propagating the IU (Interest aggregation). 

\textit{IU loss management and TFIB}: An IU loss event may break the update process and cause forwarding inconsistencies. Therefore, we introduce a hop-by-hop IU retransmission mechanism leveraging hop-by-hop IU ACKs and soft state kept after IU forwarding in what we denote as Temporary FIB (TFIB). 

Upon each IU forwarding, a couple $(face,timer)$ is added to the TFIB entry corresponding to the given prefix, with timer equal to a retransmission timer that can be adapted based on IU/ACK observed round trip time. 
If the timer expires before reception of the ACK, a new IU transmission is scheduled, while the added information is removed at ACK reception. It is worth noticing that such IU retransmission mechanism is not dependent of update completion, but can be performed in a purely distributed fashion at every router. 

\begin{figure}[htbp]
	\centering\includegraphics[width=0.85\textwidth]{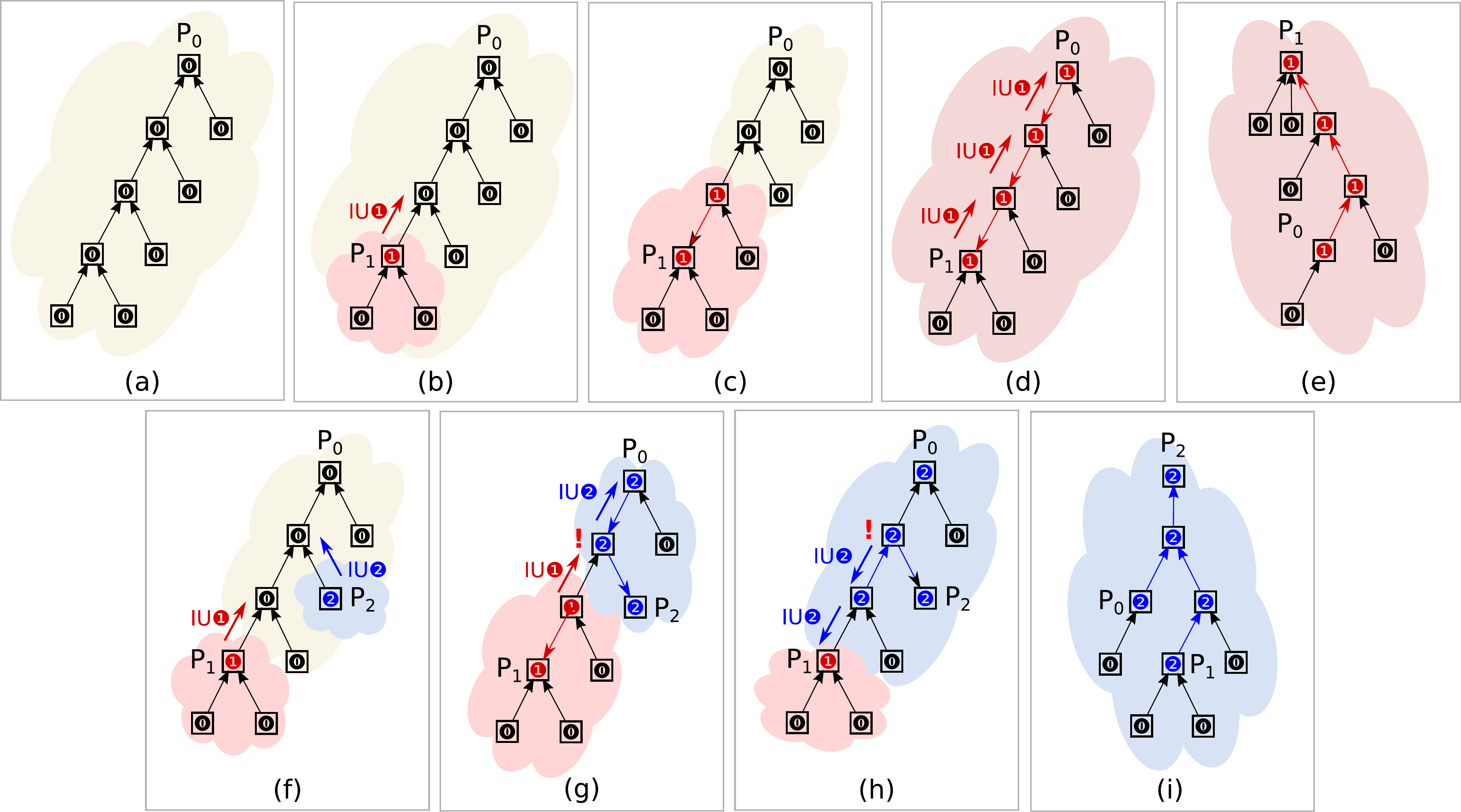}
	\caption{Interest Update propagation example.}
	\label{fig:proposalIU}
\end{figure}
\begin{figure}[htbp]
	\centering\includegraphics[width=0.5\columnwidth]{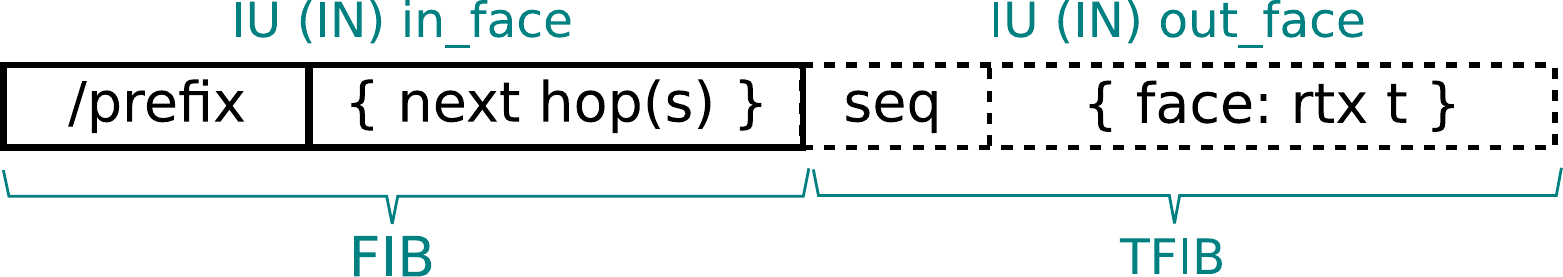}
	\caption{\name FIB/TFIB description.}
		\label{fig:proposalIU-data}
\end{figure}
Algorithm~\ref{alg:update} presents the complete update forwarding process, which is resilient to all such events. Forwarding of regular (consumer) Interests is unchanged.

\begin{algorithm}[H]
	\small
	\caption{ForwardUpdate(UpdateInterest $U$, Ingress face $F$)}\label{alg:update}	
	CheckValidity() \par
	
	\Comment{Get FIB and TFIB entries associated to the prefix} \par
	\Let{$\upepsilon, T$}{FIB.LongestPrefixMatch($U$.name)} \par
	\uIf{ $\upepsilon$ = \o}{
		\Let{$T$}{new TFIBEntry()} \par
		\Let{$\upepsilon$}{FIBEntry.New(Nextop=[$F$], Seq=0, Data=$T$)} \par
		Fib.Add($\upepsilon$)
	}
	
	\If{$U$.seq $\ge \upepsilon$.seq} {
		\Comment{Send ack back}
	}
	
	\uIf{$U$.seq > $\upepsilon$.seq} {
		\Let{$\upepsilon$.seq}{$U$.seq} \par
		
		\Comment{Forward special interest reliably by setting ReTx timer} \par
		\ForAll{ $f \in T$.PendingUpdates} {
			$f$.Send($U$) \par
			$T$.PendingUpdates[$f$] = new RetxTimer($\tau$)
		}
		
		\Let{isNextop}{False} \par
		\ForAll{$f \in \upepsilon$.Nextops} {
			\uIf{$f = F$} {
				\Let{isNextop}{True} \par
				\emph{continue} \par
			}
			$f$.Send($U$) \par
			$T$.PendingUpdates[$f$] = new RetxTimer($\tau$)
		}
		
		\uIf{ ! isNextop} {
			\uIf{$F \in T$.PendingUpdates} {
				ResetTimer($T$.PendingUpdates[$F$]) \par
			}
			$\upepsilon$.Nextops = [$F$]
		}
	}
	\uElseIf{$U$.seq = $\upepsilon$.seq} {
		$\upepsilon$.Nextops.Add($F$)
	}
	\uElse {
		\Comment{Send corrected updated backwards reliably} \par
		$U$.seq = $\upepsilon$.seq \par
		$F$.Send($U$) \par
		$T$.PendingUpdates[$F$] = new RetxTimer($\tau$)
	}
\end{algorithm}
If a producer is temporarily unavailable (e.g. out of range) or disconnects from a node (e.g. handover) the corresponding FIB entries are not removed. 
\name is allowed to update existing FIB entries, but not to remove them. It is up to the control and management plane to take care of longer term FIB updates, whereas \name manages network connectivity between consecutive routing updates.

\subsection{Map-Me Enhancements:\\ Notifications and Discovery}

IU propagation in the data plane helps considerably to accelerate forwarding state re-convergence w.r.t. global routing (GR) or resolution-based (RB) approaches operating at routing plane and to anchor-based (AB) approaches requiring traffic tunneling through the anchor. Still, network latency makes IU completion not instantaneous and before an update completes, it may happen that a portion of the traffic is forwarded to the previous point-of-attachment and dropped because of the absence of valid output face leading to the producer. To prevent such losses, previous work in the AL category have suggested the buffering of Interests at previous producer location (\cite{ANC_Mobilitysupport}). owever, such solution is not suitable for applications with stringent latency requirements (e.g. realtime) and may be incompatible with IU completion times. Moreover, the negative effects on latency performance might
be further exacerbated by IU losses and consequent retransmissions, due to the wireless medium.
To alleviate such issues, we introduce two separate enhancements to \name~update protocol, namely 
\textit{(i)} a \textit{notification} mechanism for frequent, yet lightweight, signaling of producer movements to the network and
\textit{(ii)} a scoped \textit{discovery} mechanism for consumer requests to proactively search for producer recently visited locations.\\

An \textbf{Interest Notification} (IN) is a breadcrumb left by producers in every encountered point-of-attachment. It differs from IU only for the fact that it is not propagated further than the point-of-attachment and it carries a special flag in order to be identified. They shares the same sequence number with IUs (producers update it at every sent message, IU or IN) and follows same FIB lookup and update process. In contrast, however, the trace left by INs in TFIB is the list of next hop faces present in FIB before IN reception and has a timer equal to $0$, which is required to distinguish the nature of face information (IN or IU) in TFIB during the discovery phase described below.

It is worth observing that updates and notifications serve the same purpose of
informing the network of a producer movement. owever, IU process has higher signaling cost than IN's due to message propagation. 
The presence of both mechanisms allows to control the trade-off between protocol reactivity and stability of forwarding re-convergence. 

\textbf{Discovery}.
\name relies on low-latency dedicated direct links among neighboring point-of-attachments typical of mobile deployments\footnote{This is for instance the case of X2 links in LTE deployments. Such deployment is also common in multi-access WiFi networks where access points are logically under the same controller.} to enable a local discovery phase: when a consumer Interest reaches a point-of-attachment with no valid output face in the corresponding entry, the Interest is tagged with a special `discovery' flag, labeled with the latest sequence number stored in FIB (to avoid loops) and broadcasted at one hop to all neighbors along the dedicated links (i.e. LTE X2 links), as described in Algorithm~\ref{alg:forward}.

Once received, if there is a match with older or equal sequence number information in FIB, the `discovery' Interest is discarded. Otherwise, it is either forwarded on available output faces to eventually reach the producer or broadcasted again at one hop along the dedicated links in presence of $0$ timers in TFIB. 
The latter is the case of consumer Interest intercepted by a point-of-attachment updated by a IN rather than IU (hence with no available associated output face). 
 
Changes to Algorithm~\ref{alg:update} are minor since they consist in not forwarding the
IN message, and in setting to $0$ the retransmission timers in the TFIB\footnote{TFIB entries are kept during an update process in progress to correctly route further updates and preserve
network connectivity}. 
As further shown in Sec.\ref{sec:evaluation}, such a mechanism is important
to preserve the performance of flows in progress, especially when latency-sensitive. 
In the rest of the paper, we will evaluate a combined update/notification and discovery approach consisting in
sending an IN immediately after an attachment and an update at most every $T_U$ seconds, referred to as \namenosp, in order to
reduce signaling overhead especially in case of high mobility. The update-only
proposal, denoted as \nameunosp, will also be evaluated to distinguish the gains due to different protocol components.
\begin{algorithm}[H]
	\small
	\caption{InterestForward(Interest $I$, Origin face $F$)
		\label{alg:forward}}
	
	\Comment{Regular CS and PIT lookup} \par
	
	\Let{$\upepsilon$}{FIB.LongestPrefixMatch($I$.name)} \par
	\uIf{ $\upepsilon$ = \o}{
		return \par
	}
	
	\Comment{Discovery interests: discard if no progression} \par
	\If{$I$.seq != \o}{
		\If{$I$.seq $\ge \upepsilon$.seq}{
			return \par
		}
		\Let{$I$.seq}{\o} \par
	}
	
	\uIf{hasValidFace($\upepsilon$.Nextops)}{ \par
		ForwardingStrategy.process($I$, $\upepsilon$)
	}
	\uElse{
		\Let{$I$.seq}{$\upepsilon$.seq} \par 
		SendToNeighbors($I$) \par
	}		
\end{algorithm}

\subsection{Map-Me state and messages}

For the update protocol, we introduce an Interest Update (IU) message, namely an Interest packet carrying producer
prefix (one IU per non-aggregable prefix).
A flag to indicate the request for a special treatment of the packet and a sequence number field are added, the latter to prevent loops and to handle concurrent updates during IU propagation. 
Interest Notification (IN) messages only differ in terms of the special flag that prevents re-forwarding.
FIB entries are simply enriched with a sequence number, set to $0$ by routing protocol and updated according to the number carried by IU/IN messages.
Temporary data about not-yet-acknowledged updates/ notification is stored in what we denote as \textit{Temporary FIB}, TFIB, in order to ensure the reliability and robustness
of the update/notification process.
Each TFIB entry is composed of a list of $(face, timer)$ couples corresponding to the faces where IU have been sent but not yet acknowledged and associated retransmission timer. The timer is set to $0$ for faces corresponding to INs.
FIB/TFIB structure is sketched in Fig.\ref{fig:proposalIU}.

\subsection{Implementation considerations}
A mobility protocol requires some in-network state and update messages to maintain such state.
\name requires a name prefix entry in the FIB/TFIB maintaining the usual forwarding information
plus versioning. During the update process the entry must also store timing information from neighbors
as described in the previous section. The update mechanism (beyond name lookup and forwarding) is a constant 
delay operation.
It is important to note  that in order to prevent malicious updates the first node receiving the
IU from a mobile producer must check its signature to verify the ownership of the name prefix.

\section{Analysis}
\label{sec:analysis}
In this section, we investigate \name~guarantees of forwarding 
update correctness and path stretch stability and
support them by numerical evaluation over known ISP network topologies.
For sake of clarity the analysis reports the formal proofs  only
 in case of single path routing.\\
The extension to multipath routing is straightforward by using
 directed acyclic graphs instead of trees.

\subsection{Correctness and stability of IU mechanism}\label{sec:guarantees}
We consider $m$ consecutive movements of the producer in network positions 
$\{P_0,P_1,...,P_m\}$ and focus on forwarding state variations determined by \name~at the time 
instants corresponding to either producer movements or Interest Update processing. We observe 
that at any of such instants, as illustrated in Fig.\ref{fig:proposalIU}, the network is 
partitioned into a set of islands, whose number varies in $\left[ 1,m+1 \right]$ as a function 
of producer movements and, hence, of the number of ongoing update processes. We assume that 
global routing guarantees the existence of a spanning tree (SP) rooted in the original location 
$P_0$ at the beginning of the mobility process. The tree is not required to be a minimum SP 
or a shortest-path tree. About the completion of the update process after a given movement $k$, 
we can state that
\begin{proposition}
\name update mechanism guarantees finite completion time of update $k$, $\forall k \in \left[1,m\right]$ in a bounded number of hops equal to $2\left( \max_{0\le j<k} (|P_k-P_j|-1)\right)$;
\end{proposition}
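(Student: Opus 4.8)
The proposition says: after the producer's $k$-th movement (from $P_{k-1}$ to $P_k$), the IU process completes in at most $2(\max_{0\le j<k}(|P_k - P_j| - 1))$ hops. Here $|P_k - P_j|$ denotes the hop-distance in the network between positions $P_k$ and $P_j$. The factor 2 presumably accounts for: the IU travels from $P_k$ along the stale FIB path (which points back toward the root of the current island), and in the worst case it may need to traverse a path up to some previous location $P_j$ and then — because of retransmissions / ACK round trips, or because the forwarding tree was rooted at $P_j$ — the effective bound doubles. Actually, re-reading: "bounded number of hops equal to $2(\max...)$" — the $2$ is the round-trip for hop-by-hop IU/ACK, and the $\max_{0\le j<k}$ handles the fact that the island containing $P_k$'s stale path could be rooted at any earlier location due to concurrent updates.

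Let me think about the plan.

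**The plan.** I would argue by tracking the IU packet through the "islands" structure. First establish that when movement $k$ occurs, the FIB state along any directed path induced by the routing/previous-update state forms (within each island) a tree rooted at some $P_j$ with $j < k$ — this is the invariant maintained by Algorithm 1, since MAP-Me only re-roots trees and never creates cycles (sequence numbers prevent loops). Then the IU emitted from $P_k$ with a fresh (strictly larger) sequence number follows the unique stale path from $P_k$'s attachment point toward the root $P_j$ of whatever island it currently sits in. Because every hop it traverses gets stamped with the new sequence number and has its next-hop flipped toward $P_k$, the IU can never revisit a node (monotone sequence numbers ⟹ acyclic), so it reaches the root $P_j$ in at most $|P_k - P_j| - 1$ intermediate hops (the $-1$ because $P_k$ itself is the source). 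The worst case over which island / which root is exactly $\max_{0\le j<k}$. The remaining factor of $2$ comes from the hop-by-hop reliability layer: each IU forward is matched by an ACK traveling back, or equivalently a lost IU triggers a retransmit that must itself be acked — so the number of *message transmissions* (or the time measured in hops) along the longest branch is at most twice the path length. I'd make this precise by noting the TFIB $(face,timer)$ mechanism guarantees each hop eventually succeeds, and "completion" is declared when the last ACK is received at $P_k$, giving one forward traversal plus one backward ACK traversal of the same $\le |P_k-P_j|-1$ length path.

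**Key steps in order.** (1) State and prove the island/tree invariant: at every event instant the FIB graph restricted to the prefix is a forest, each component a tree rooted at some $P_j$, $j<k$. (2) Show sequence numbers are strictly increasing along the IU's trajectory, hence the trajectory is a simple path. (3) Bound the length of that simple path by the graph distance from $P_k$ to the root of its island, $\le |P_k - P_j| - 1 \le \max_{0\le j<k}(|P_k-P_j|-1)$. (4) Account for the ACK/retransmission round-trip to get the factor $2$, and argue finiteness: since retransmissions only happen on loss and the per-hop operation is constant-delay, after finitely many retransmissions each hop is acked, so completion time is finite and hop-bounded as claimed. (5) Conclude.

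**Main obstacle.** The delicate part is step (1) combined with handling *concurrent* updates: when update $k$ fires while updates $k-1, k-2,\dots$ are still in flight, the IU for movement $k$ may meet routers in heterogeneous sequence-number states, and I must argue that the IU nonetheless still only ever follows edges that lead monotonically toward *some* $P_j$ with $j<k$ and never wanders into a not-yet-updated region that would lengthen its path beyond $\max_{0\le j<k}(|P_k-P_j|-1)$. In particular I need the claim that a partially-completed earlier update cannot create a stale path from $P_k$ to a root "farther" than the farthest previous position — this is where the $\max$ over *all* $j<k$ (rather than just $j=k-1$) is doing real work, and where I'd have to be careful that the IU with the highest sequence number always "wins" at each router (the `$U.seq > \epsilon.seq$' branch of Algorithm 1) so that no concurrent update can deflect it. I would handle this by an induction on $k$: assuming all earlier updates have the stated structure, the union of their (possibly incomplete) re-rootings still yields, for the $P_k$ component, a tree whose root is one of $P_0,\dots,P_{k-1}$, bounding the IU path by the triangle-type inequality $|P_k - \text{root}| \le \max_{j<k}|P_k-P_j|$.
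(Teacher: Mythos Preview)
Your proposal has a genuine gap in how you account for the factor $2$ and, relatedly, in which concurrent updates you consider.

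You attribute the factor $2$ to the hop-by-hop ACK round trip. The paper's argument is different: the doubling comes from the \emph{bounce-back} branch of Algorithm~\ref{alg:update}. When $IU_k$ reaches a router whose stored sequence number is \emph{strictly larger} than $k$ (because a later movement $k'>k$ has already touched that router), the update is not simply acknowledged and dropped; it is relabelled with $k'$ and sent \emph{back} along the ingress face, continuing hop-by-hop toward $P_k$. The worst case is that $IU_k$ travels almost all the way to the farthest previous root $P_j$ (distance $|P_k-P_j|-1$), meets a fresher sequence number on the last hop, and then the reflected message traverses the same path in reverse back to $P_k$. That is where $2(\max_{0\le j<k}(|P_k-P_j|-1))$ comes from.

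Your plan only contemplates earlier updates $k-1,k-2,\dots$ being in flight and explicitly relies on ``the IU with the highest sequence number always wins''. But $IU_k$ need \emph{not} carry the highest sequence number in the network: movements $k+1,\dots,m$ may already have fired before $IU_k$ completes. Your invariant ``each component is a tree rooted at some $P_j$, $j<k$'' is therefore false in general, and the case analysis in the paper's proof (outcome (ii): the router has a more recent sequence number) is precisely what you are missing. Without it, you cannot justify either finiteness (why doesn't $IU_k$ get deflected indefinitely?) or the specific bound. The fix is to replace the ACK-round-trip argument with the two-outcome dichotomy and observe that in outcome (ii) the reflected message now carries a strictly larger sequence number, so the total hop count is at most the forward leg plus the return leg, each bounded by $\max_{0\le j<k}(|P_k-P_j|-1)$.
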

\begin{proof}
Assuming IU losses are handled by the retransmission mechanism described in Sec.\ref{sec:approach}, the hop-by-hop propagation of an IU has two possible outcomes: either (i) the next router has a sequence number which is inferior to the IU carried sequence number and in this case IU continues its propagation towards the root of the latest routed tree, 
or (ii) the router has a more recent sequence number, hence the IU is sent back with the 
encountered higher sequence number towards the originating routed position of the producer. Since the maximum sequence number is bounded by $m$, the maximum number of hops traversed by IU with sequence number $k$ is finite.
More precisely, the maximum number of hops traversed by IU with sequence number $k$, $IU_k$ is bounded by twice the maximum distance between the originating router $P_k$ and the farthest previous location $P_j$, $j<k$ minus one, i.e. $2\left( \max_{0\le j<k} (|P_k-P_j|-1)\right)$. 
Indeed, the worst case occurs when $IU_k$ encounters a more recent update $k'>k$ at the hop before reaching the latest routed previous location, which can also coincide with the farthest one in terms of distance. In such case, $IU_k$ propagates back to $P_k$ carrying $k'$ sequence number before stopping. 
\end{proof}
After $IU_k$ propagation, the router $P_k$ and all its predecessors traversed by $IU_k$ to reach the last routed location are connected to the island of highest encountered sequence number, and thus the number of distinct islands reduced by one unit. By iterating the same process on all IUs, it is straightforward to see that at $IU_m$ completion $m+1$ islands associated to sequence number $0,1, ..., m-1$ will have merged into the island created by $IU_m$. About the properties of such island, we can state the following.
\begin{proposition}\label{proposition:tree}
Given a sequence of $m$ consecutive movements of producer position on the routing tree rooted in $P_0$, producer movement $m$ induces a new tree rooted in $P_m$.
\end{proposition}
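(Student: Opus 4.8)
The plan is to argue by induction on the number of movements $m$, using Proposition 1 (finite completion of each $IU_k$) and the ``island-merging'' observation stated just before Proposition~\ref{proposition:tree} as the engine of the induction. The base case $m=0$ is the hypothesis: global routing gives a spanning tree rooted in $P_0$. For the inductive step I would assume that after movements $1,\dots,k-1$ and completion of $IU_1,\dots,IU_{k-1}$, the forwarding state forms a tree rooted in $P_{k-1}$ (more precisely, a tree whose edges are the FIB next-hop pointers carrying the highest sequence number $k-1$), and then show that processing $IU_k$ re-roots this tree at $P_k$.

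The core of the step is a local analysis of what $IU_k$ does along its path, reusing the two-outcome dichotomy from the proof of Proposition 1. By Proposition 1, $IU_k$ traverses a finite path $P_k = v_0, v_1, \dots, v_\ell$ that ends at the previously routed location of the producer (the old root), at which point there is no further next hop. At each router $v_i$ on this path with an older sequence number, the FIB next-hop set is overwritten to point back toward $v_{i-1}$ (the face on which $IU_k$ arrived), i.e. toward $P_k$; every other router in the island keeps its next-hop pointer unchanged. So the only edges that change orientation are those on the path from $P_k$ to the old root, and they all get reversed to point toward $P_k$. I would then invoke the standard fact that reversing the edges along the unique root-to-$v$ path in a rooted tree, and making $v$ the new root, again yields a (spanning) tree on the same vertex set — here with $v = P_k$. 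Combined with the island-merging remark (all islands with sequence numbers $0,\dots,k-1$ get absorbed into the island of $IU_k$), this shows the post-$IU_k$ state is a single tree rooted in $P_k$. Taking $k=m$ gives the proposition.

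The main obstacle is making the notion of ``the tree'' precise enough that the re-rooting argument is actually a statement about a well-defined graph, rather than hand-waving over the distributed, partially-updated FIB state. Concretely I need to (a) define the relevant graph at a given instant as the set of directed edges $(u, \text{nexthop}(u))$ over routers holding the current maximum sequence number, (b) check that while $IU_k$ is still in flight the state need not be a tree (it is the union of islands, consistent with the ``$[1,m+1]$ islands'' discussion), so the claim is really about the state \emph{after} $IU_k$ completes, and (c) confirm that the concurrent-update rules (equal sequence number $\Rightarrow$ add face without propagating; lower $\Rightarrow$ bounce back the corrected sequence number) do not create extra edges that would spoil treeness in the single-path case — which holds because a single-path producer emits one face, so on the single-path routing assumption each router ends with exactly one next hop toward $P_k$. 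Once these bookkeeping points are nailed down, the re-rooting itself is the elementary tree fact quoted above, and the multipath version follows by replacing ``tree'' with ``DAG'' and ``unique root-to-$v$ path'' with ``the set of such paths'', exactly as the paper already flags.
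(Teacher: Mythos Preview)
Your proposal is correct and follows essentially the same line as the paper: the key step in both is that $IU_k$ walks the unique path from $P_k$ to the currently routed root and flips the orientation of every edge on that path, which is exactly the elementary ``re-rooting a rooted tree at a descendant'' operation, while the concurrent-update rule (higher sequence number wins and propagates back) guarantees all partially updated islands merge into the one of highest sequence number. The paper's proof is terser and does not spell out the induction or the graph-theoretic bookkeeping you list in (a)--(c), but the argument is the same.
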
 
\begin{proof}
The initial spanning tree is a rooted directed tree in $P_0$ giving the routes to reach 
all nodes in the network. \name~update mechanism after movement $m$ flips all directed 
links from $P_m$ up to the latest routed position $P_j$, $j<m$, so that they point to 
$P_m$. In presence of multiple concurrent updates, the most recent one, i.e. the one with 
highest sequence number also propagates back along the routes of encountered previous updates. Thus, update completion results in fully merging different rooted trees into the one of highest sequence number, $m$, rooted in $P_m$. 
\end{proof}
\begin{corollary}
\name~is loop free under loop free global routing.
\end{corollary}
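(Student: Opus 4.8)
The plan is to leverage Proposition~\ref{proposition:tree} directly: I want to show that at every time instant of the mobility process (i.e., between consecutive movements or between IU-processing events) the global forwarding state contains no directed cycle, given that the initial routing configuration produced by global routing is loop free.

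First I would set up the inductive structure over the sequence of events. The base case is the initial configuration: by hypothesis the global routing protocol installs a loop-free FIB state, which in the single-path case is exactly the spanning tree rooted in $P_0$ assumed at the start of Sec.~\ref{sec:guarantees}. Next I would argue that between movement $k$ and movement $k+1$ the only FIB modifications are those performed by \namenosp's update algorithm (Algorithm~\ref{alg:update}) while propagating $IU_k$ (and possibly concurrent $IU_j$ for $j<k$ still in flight). So it suffices to show that each individual hop of an IU's propagation preserves loop-freedom of the whole directed graph of FIB next-hops.

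The key step is the local invariant, which I would phrase in terms of sequence numbers: at all times the next-hop pointer of any router points toward a router whose sequence number for the prefix is greater than or equal to its own, and strictly greater along the edges that were flipped by an ongoing update, with $P_k$ (the current root of the $k$-th update) having the unique maximum. A directed cycle would require the sequence numbers to be non-decreasing all the way around and hence constant on the cycle; but Algorithm~\ref{alg:update} only redirects a router's next hop to the \emph{ingress} face of an IU carrying a \emph{strictly higher} sequence number (the \texttt{if $U$.seq $>$ $\upepsilon$.seq} branch), and in the equal-sequence-number case it merely \emph{adds} a face without creating a redirection that could close a loop in the single-path abstraction. I would then invoke Proposition~\ref{proposition:tree}: once all updates complete, the state is again a single tree rooted in $P_m$, which is acyclic; and during the transient, the partition into islands established in the discussion preceding Proposition~\ref{proposition:tree} means each island is an in-tree toward its own (highest-in-island) root, with the flipped path from $P_k$ being a simple path by the bounded-hop argument of Proposition~1 — so no cycle can form within or across islands.

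I expect the main obstacle to be handling concurrent updates cleanly: when $IU_k$ meets a fresher update $k'>k$ and bounces back carrying $k'$, one must check that the ``correction'' writes (the final \texttt{uElse} branch of Algorithm~\ref{alg:update}, which resets a router's next hop) cannot transiently install a cycle before the sequence numbers have propagated consistently. The cleanest way to dispatch this is to observe that the correction only ever rewrites the next hop of a router to point back along the face the IU just arrived on, i.e., it reverses exactly one freshly-created edge, and the sequence-number monotonicity invariant is restored at that router the instant the rewrite happens — so the global graph never leaves the set of ``islands = in-trees'' configurations. I would also remark that the discovery mechanism (Algorithm~\ref{alg:forward}) cannot introduce loops because discovery Interests are dropped whenever they reach a router whose stored sequence number is not strictly smaller (\texttt{if $I$.seq $\ge \upepsilon$.seq}), which is the same monotone-progress guard. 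Finally, I would note the multipath extension is immediate by replacing ``tree'' with ``DAG'' throughout, exactly as the paper indicates, since the sequence-number argument never used single-successor-ness except as a notational convenience.
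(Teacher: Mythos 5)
Your proposal is correct and rests on the same pivot as the paper's proof --- Proposition~\ref{proposition:tree}'s statement that the update process only flips edges along a path and therefore turns one rooted tree into another, so the resulting directed graph has the same number of nodes and links and remains an acyclic in-tree. Where you differ is in scope: the paper's argument is a two-line reduction that only speaks about the state \emph{after} an update completes, whereas you additionally carry an inductive, event-by-event invariant (next-hop sequence numbers are non-decreasing along forwarding edges, so any cycle would have constant sequence number and then be excluded by the per-island in-tree structure) to certify loop-freedom \emph{during} transients, including concurrent in-flight updates, the bounce-back correction branch of Algorithm~\ref{alg:update}, and the discovery guard in Algorithm~\ref{alg:forward}. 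That extra work is a genuine strengthening rather than a different route: the paper implicitly relies on the island decomposition preceding Proposition~\ref{proposition:tree} to cover the transient case but never argues it, so your sequence-number monotonicity invariant fills a real gap in rigor. The trade-off is length --- the paper's version is adequate if one is content to claim loop-freedom only of quiescent states, while yours is what one would need to claim that no consumer Interest can ever be caught in a forwarding loop mid-update, which is the operationally meaningful statement.
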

\begin{proof}
Starting from the spanning tree given by global routing, Prop.\ref{proposition:tree} states that \name~induces a new tree, as it only flips all edges over the unique path from the original position to the new one.
Indeed, given the unchanged number of links/nodes, the result is still a directed tree rooted in the new position. ence, it is loop free.
\end{proof}
\begin{proposition}
\name~path stretch for node $i$ over the tree rooted in $P_m$, created after producer $m$-th movement, is upper bounded by the ratio $(|i-P_0|_{P_0}+|P_0-P_m|_{P_0})/|i-P_m|_{P_m}$ as $m \rightarrow \infty$, which corresponds to the path stretch of the anchor-based approach with anchor in $P_0$.
\end{proposition}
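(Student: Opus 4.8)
The plan is to bound, for an arbitrary node $i$, the length of the path that a consumer Interest originating at $i$ travels to reach the producer at $P_m$ once all updates have completed, and compare it with the shortest path $|i-P_m|_{P_m}$. First I would invoke Proposition~\ref{proposition:tree}: after update $m$ has completed, the forwarding state is exactly the tree rooted in $P_m$ obtained from the original routing tree rooted in $P_0$ by flipping all edges along the unique $P_0$--$P_m$ path. Hence the route from $i$ to $P_m$ in this tree is obtained by concatenating two segments: the segment from $i$ up toward $P_0$ along the original tree, and the segment from $P_0$ down to $P_m$ along the (now reversed) former $P_0$--$P_m$ path. The key observation is that these two segments meet at the point where the $i$-to-root path first intersects the $P_0$-to-$P_m$ path; in the worst case that meeting point is $P_0$ itself, so the realized path length is at most $|i-P_0|_{P_0} + |P_0-P_m|_{P_0}$, which is precisely the numerator of the claimed ratio.

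Next I would argue why this is also what an anchor-based scheme with anchor $P_0$ realizes: in AB, a consumer Interest is routed on the shortest path from $i$ to the anchor $P_0$ (length $|i-P_0|_{P_0}$) and then tunneled from $P_0$ to the current location $P_m$ (length $|P_0-P_m|_{P_0}$), giving the same numerator. Dividing by the true shortest-path distance $|i-P_m|_{P_m}$ yields the stated stretch expression, establishing that \name is never worse than AB-with-anchor-$P_0$ in the limit.

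The role of the $m\to\infty$ limit deserves a separate step. For finite $m$, before the chain of updates $IU_1,\dots,IU_m$ has fully completed, the network is partitioned into several islands (as established after Proposition~1) and an Interest may traverse a temporarily longer detour across island boundaries; moreover the path may pass through intermediate former locations $P_j$ rather than all the way back to $P_0$. So the bound is asserted for the steady state after all updates settle, and the ``$m\to\infty$'' phrasing is really capturing that we take the post-convergence configuration and let the number of accumulated flips along the original $P_0$--$P_m$ spine be arbitrary; I would make precise that once $IU_m$ completes, Proposition~\ref{proposition:tree} applies verbatim and the bound holds, the limit only serving to say the former-location chain can be collapsed to its extremal case $P_0$.

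The main obstacle I anticipate is being careful about \emph{which} former location the reversed spine actually routes through, and handling concurrent/incomplete updates cleanly: the clean statement holds for the tree of Proposition~\ref{proposition:tree}, but one must verify that no ongoing update can make the realized path \emph{longer} than the $P_0$-anchored detour — intuitively true because every flip only ever redirects traffic along sub-paths of the original spanning tree, so the worst reachable detour is the full $i\to P_0\to P_m$ walk, but making this monotonicity argument rigorous (rather than hand-waving via the limit) is the delicate point.
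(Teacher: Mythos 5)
Your proposal is correct and follows essentially the same route as the paper: both rely on Proposition~\ref{proposition:tree} to reduce the claim to the $P_m$-rooted tree obtained by flipping edge directions on the original spanning tree, and both bound the realized path by the detour $i\to P_0\to P_m$, matching it to the anchor-based numerator. The only presentational difference is that you fold the paper's two explicit cases ($P_0$ on or off the $i$--$P_m$ tree path) into a single worst-case meeting-point argument, which amounts to the triangle inequality in the unchanged tree metric and is, if anything, slightly cleaner.
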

\begin{proof}
We can distinguish two cases according to whether $P_0$ is on the path between $i$ and $P_m$ on the $P_m$-rooted tree or not. 
If it is, then the path between $i$ and $P_m$ may be split into the paths $i$ to $P_0$ and $P_0$ to $P_m$.
The second component is equal to the path length between $P_m$ and $P_0$ on the initial tree (only directions have been flipped).

The first one 
corresponds to the same path on the initial tree even in terms of directions.
Therefore, the path stretch in this case is exactly equal to $(|i-P_0|_{P_0}+|P_0-P_m|_{P_0})/|i-P_m|_{P_m}$.
Otherwise, if $P_0$ is not on the path between $i$ and $P_m$, the path between $i$ and $P_m$ is, by definition of \name~update process (that utilizes shortest path routing for IUs), shorter than the one including the detour via $P_0$ on the initial $P_0$-rooted tree.
The bound remains true as $m \rightarrow \infty$, because it is intrinsically related to the properties of the initial tree.
\end{proof}

\begin{figure}[htbp]
	\subfigure[AS 1221 stretch evolution]{
		\includegraphics[width=0.45\textwidth]{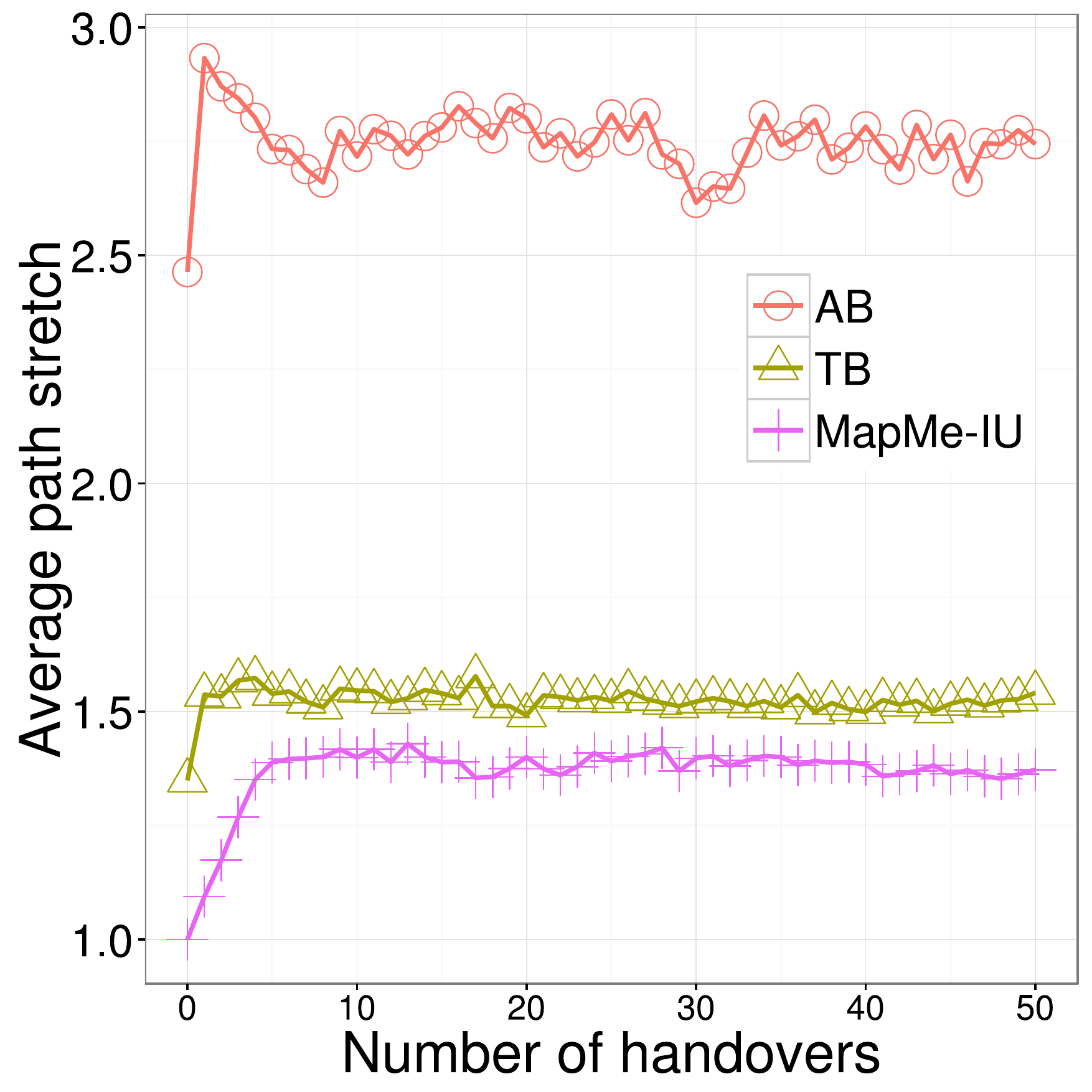}
		\label{fig:stretch-evolution-1221}
	}
  \hfill
	\subfigure[Stretch comparison]{
		\includegraphics[width=0.45\textwidth]{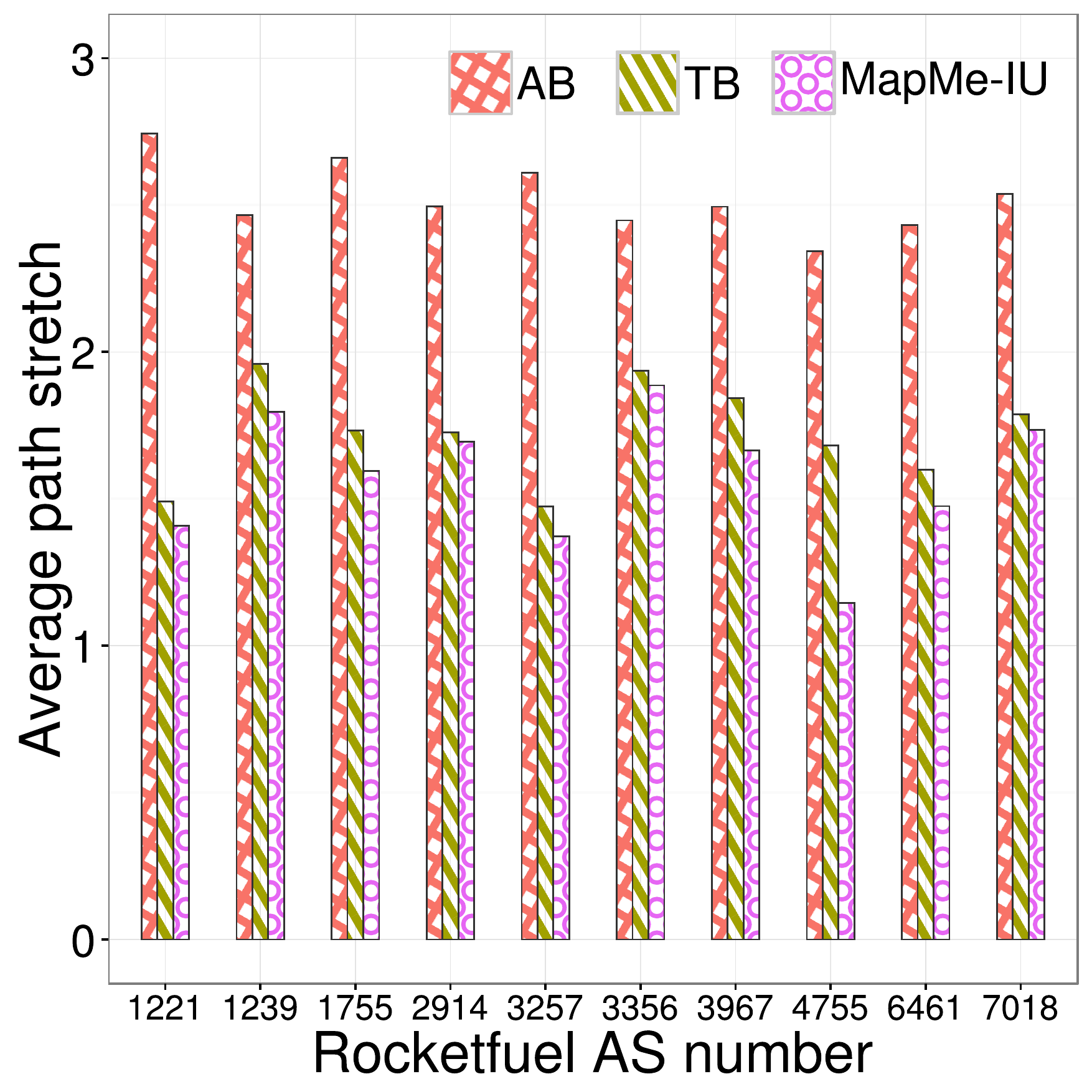}
		\label{fig:stretch-waypoint}
	}
	\caption{Path stretch evolution (a) and comparison (b) over Rocketfuel topologies.}
\end{figure}

\subsection{Numerical Evaluation of path stretch}
We compute now the average path stretch obtained by AB, TB
and \name without notifications (the latter are evaluated by simulation in Sec.\ref{sec:evaluation}), on the topologies obtained from the Rocketfuel project\footnote{\url{http://research.cs.washington.edu/networking/rocketfuel/}. 
We extract the undirected graph corresponding to the largest connected
component as in \cite{ANC_MobiCCN, zhang2014kite}. The differences in resulting datasets
prevent a comparison of results.}. 
The initial position on the consumer, producer and eventual anchor are chosen randomly, and
two mobility models are implemented : (i) uniform, where the producer can jump
towards any other node from the graph, and (ii) random waypoint (RWP), where the producer
chooses a waypoint like in the previous approach, but advances hop-by-hop on the shortest path towards that waypoint, then starts again. 
We average over  $1000$ runs in order to compute ensemble average for the path stretch after $k$ movements of the producer with small confidence intervals. 
Figure~\ref{fig:stretch-evolution-1221} represents the evolution of average
\name path stretch over AS 1221 topology under RWP (other patterns show similar
trends). We observe that path stretch stabilizes beyond $10$ movements, due to the fact that 
\name~preserves the initial structure of the forwarding tree (it only modifies
links direction).  Other Rocketfuel topologies show quantitatively the same
results.  A comparison of the different approaches over the all 10 Rocketfuel
topologies is in Fig.\ref{fig:stretch-waypoint}. 
Under both uniform and RWP mobility, \name outperforms AB, achieving up to
$55\%$ stretch reduction, as well as TB.

\section{Evaluation} \label{sec:evaluation}
\subsection{Simulation setup}

The section gathers simulation assessment of \name~performance over different network scenarios and mobility patterns. 
To this aim we implemented \name~(full approach and with updates only, denoted as MAP-Me-IU), anchor-based (AB),  an example of tracing-based (TB) based on Kite (\cite{zhang2014kite}), and global routing (GR) approaches in NFD within the NDNSim 2.1 framework. We leave out of the comparison resolution-based (RB) or existing AL-solutions leveraging buffering, as they fail to support latency-sensitive applications (see Tab.\ref{tab:comparison} in Sec.\ref{sec:relwork}).
\begin{figure}[htbp]
	\centering
	\includegraphics[width=0.6\columnwidth]{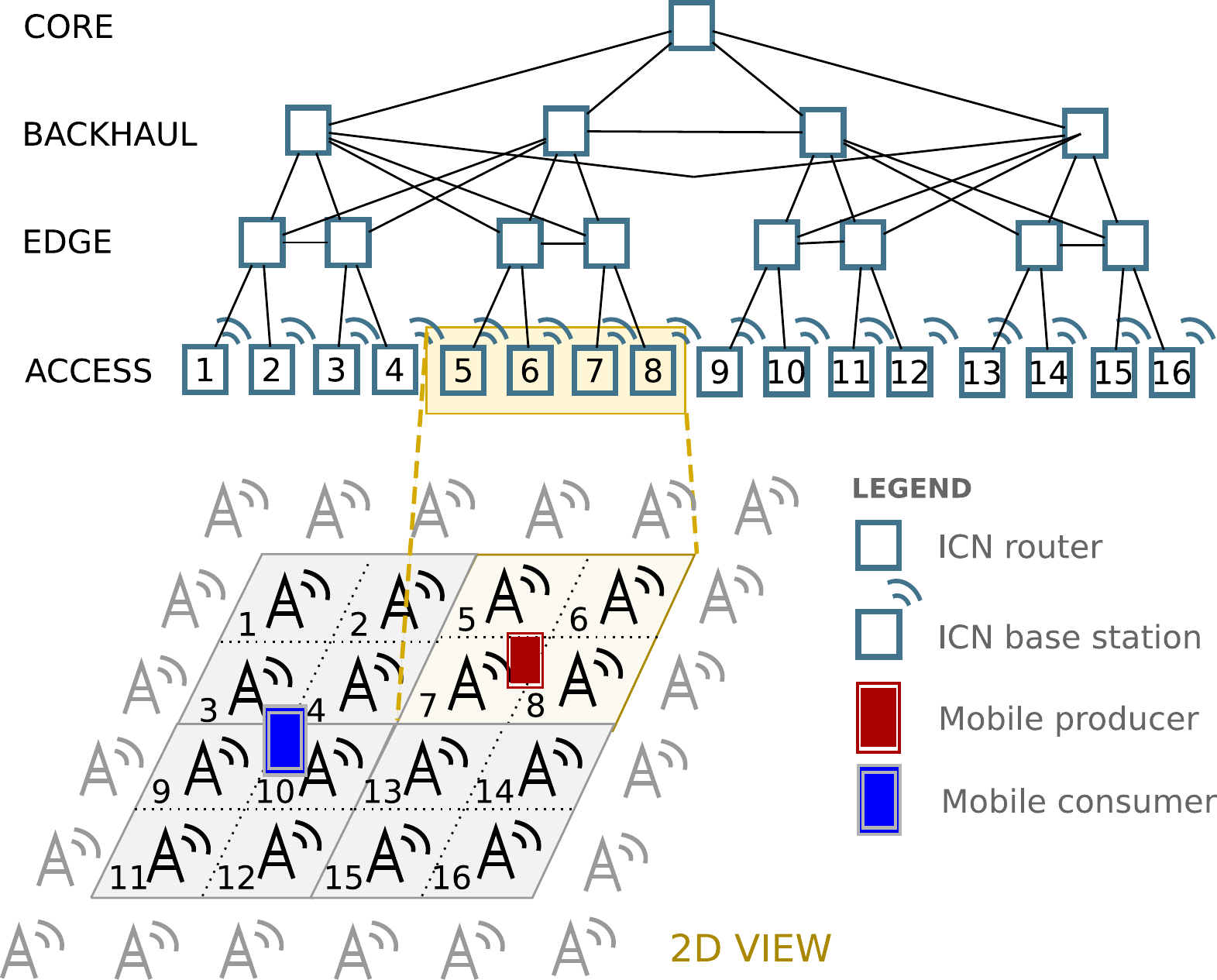}
	\caption{Network with link capacity C=10Mb/s.}\label{fig:topology}
  \vspace{-3mm}
\end{figure}

\begin{figure*}[htbp]
	\centering
	\subfigure[]{
		\includegraphics[width=0.45\textwidth]{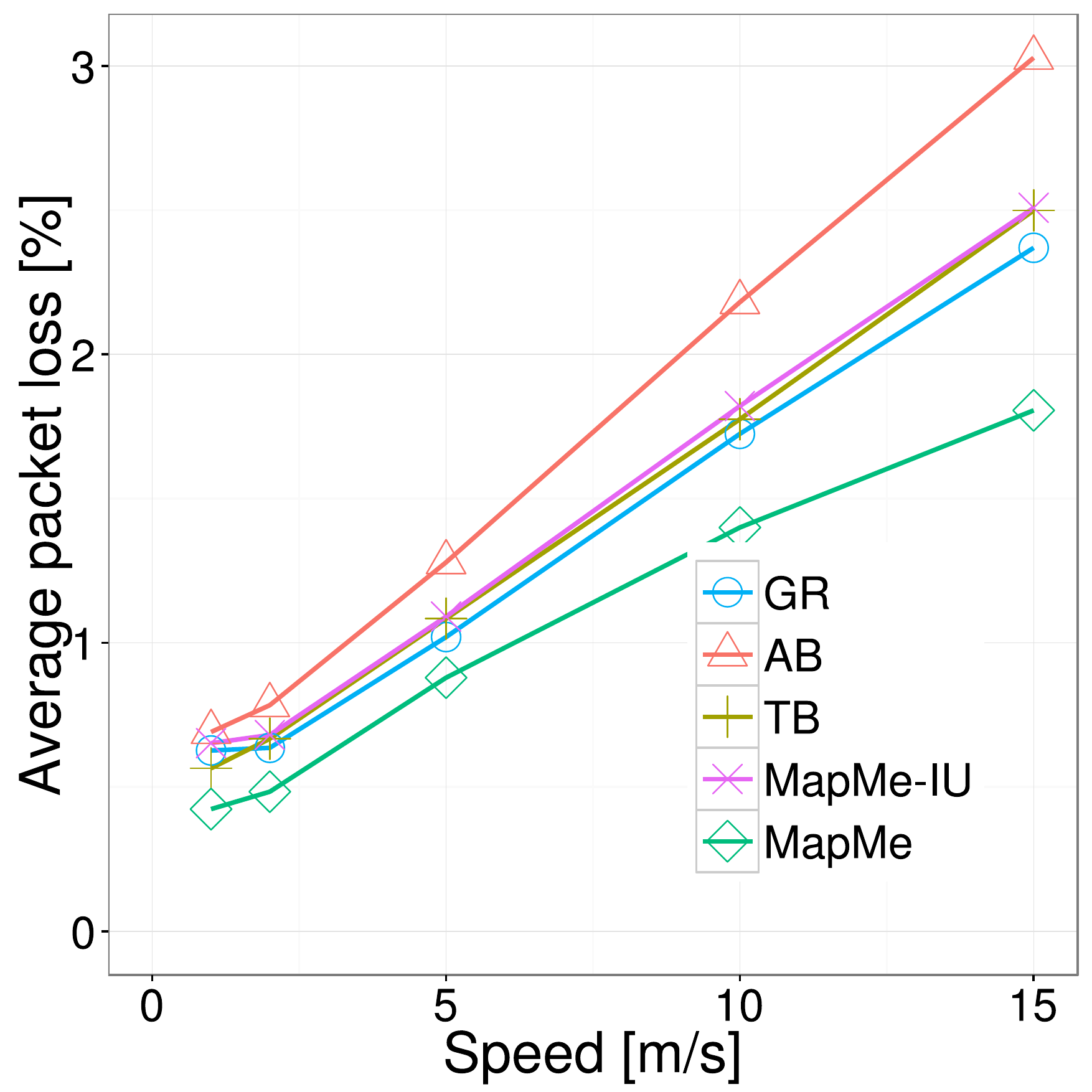}
		\label{fig:loss_rate}
	}
	\hfill
	\subfigure[]{
		\includegraphics[width=0.45\textwidth]{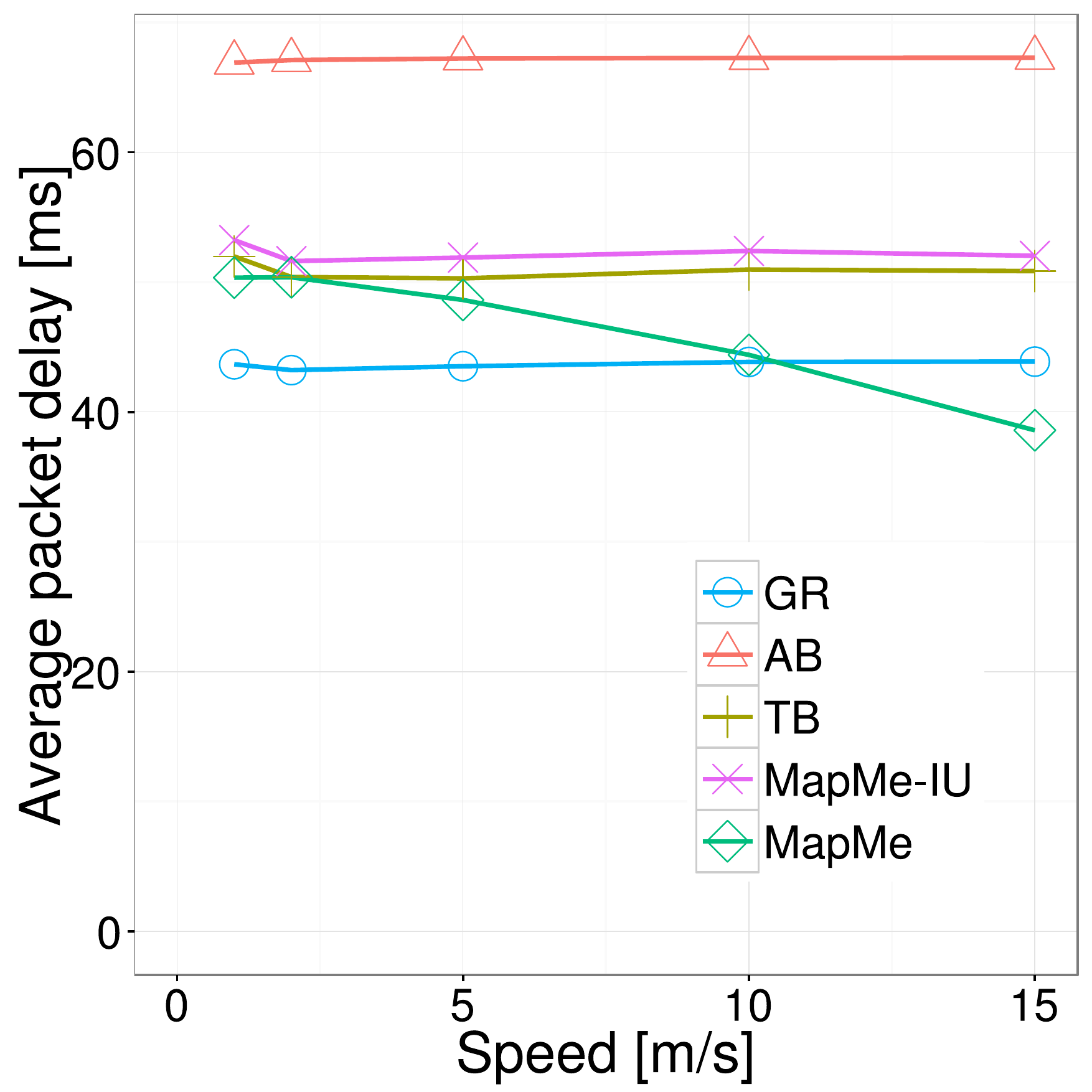}		
		\label{fig:delay}
	}
	\hfill	
	\subfigure[]{		
		\includegraphics[width=0.45\textwidth]{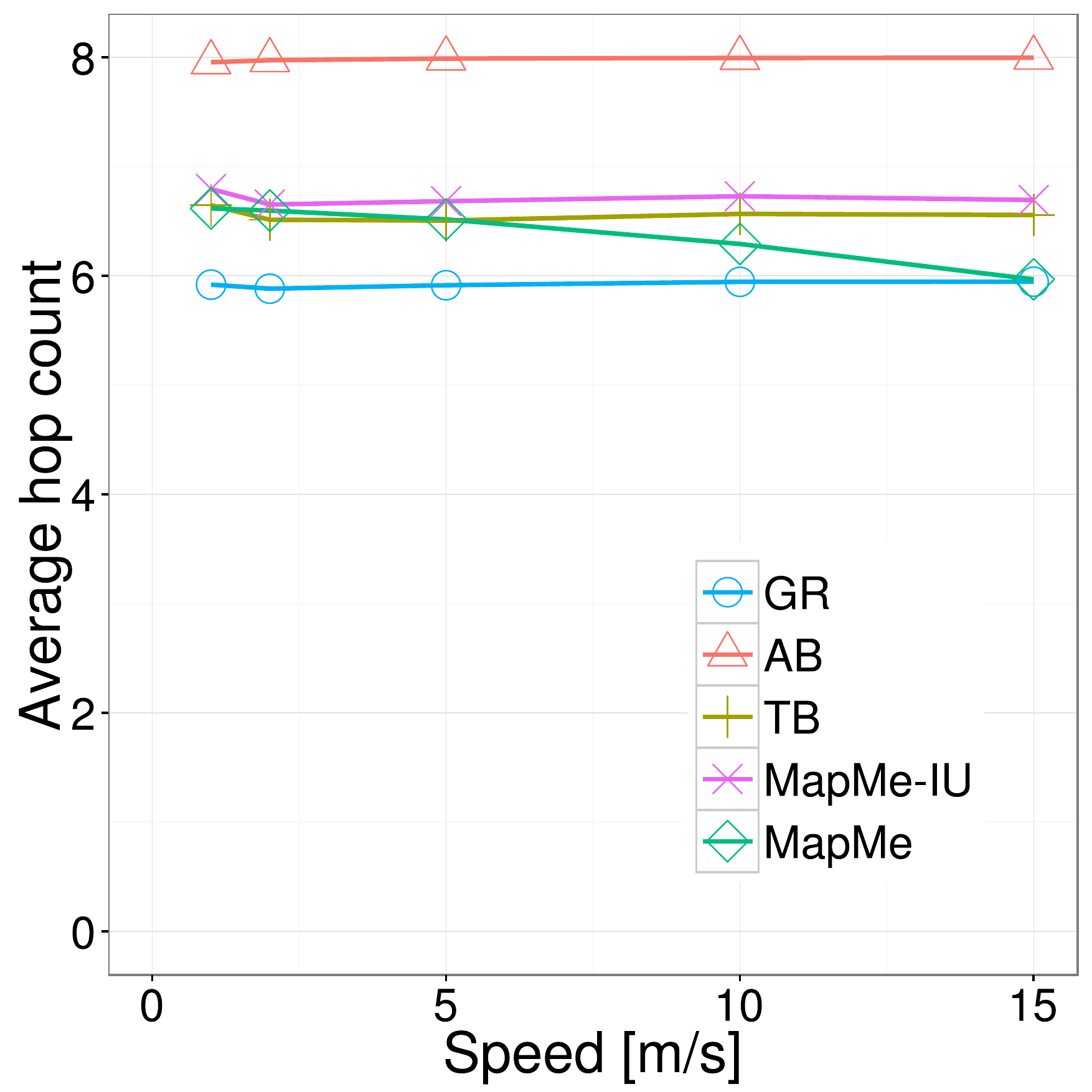}
		\label{fig:stretch}
	}
	\hfill
	\subfigure[]{
		\includegraphics[width=0.45\textwidth]{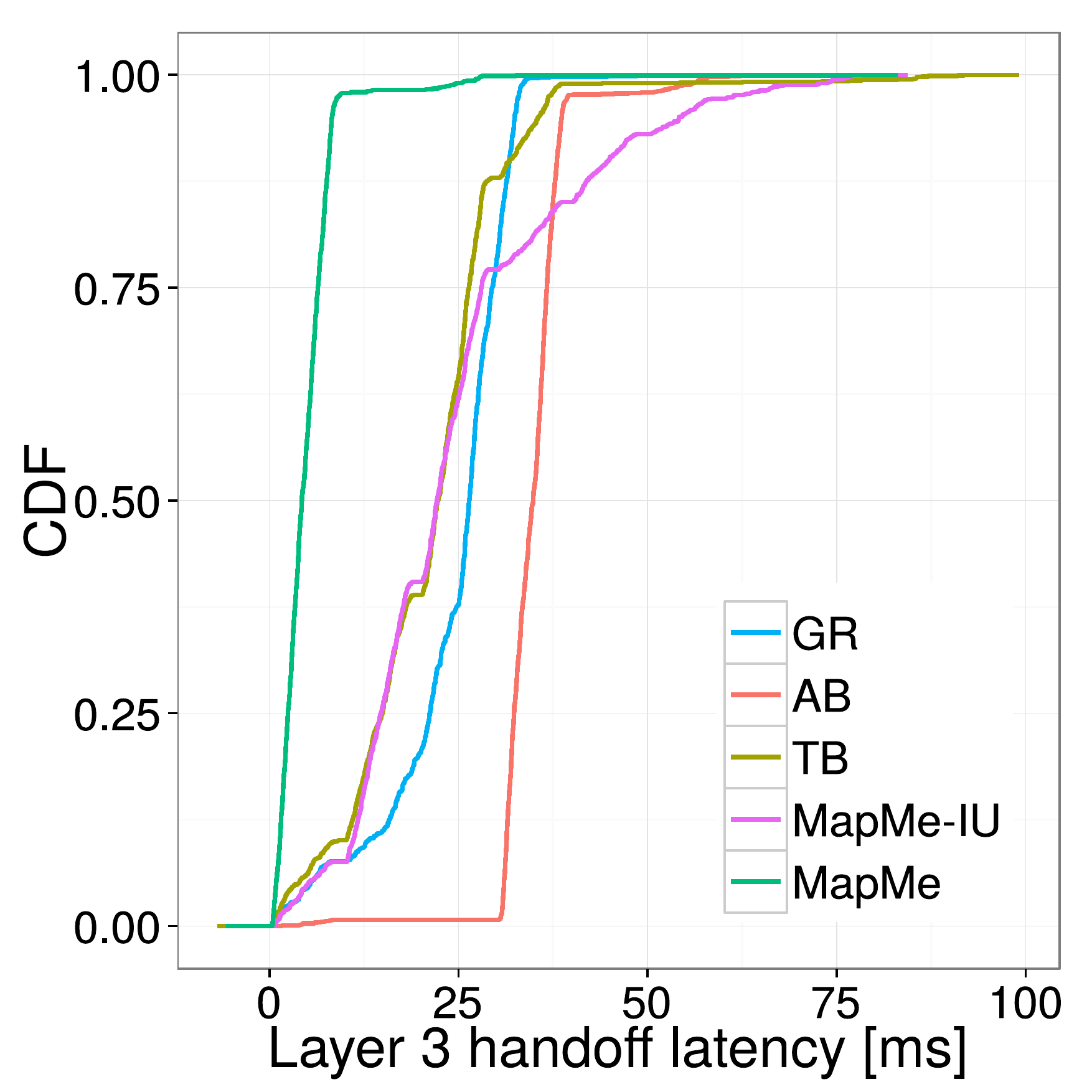}
		\label{fig:l3delay}	
	}
	\vspace{-2mm}
	\caption{User performance. packet loss (a), delay (b) and hop count (c); CDF L3 handoff latency (d).}
	\label{fig:user-network}
\end{figure*}

\begin{figure*}[htbp]
	\centering	
	\subfigure[] {
		\includegraphics[width=0.45\textwidth]{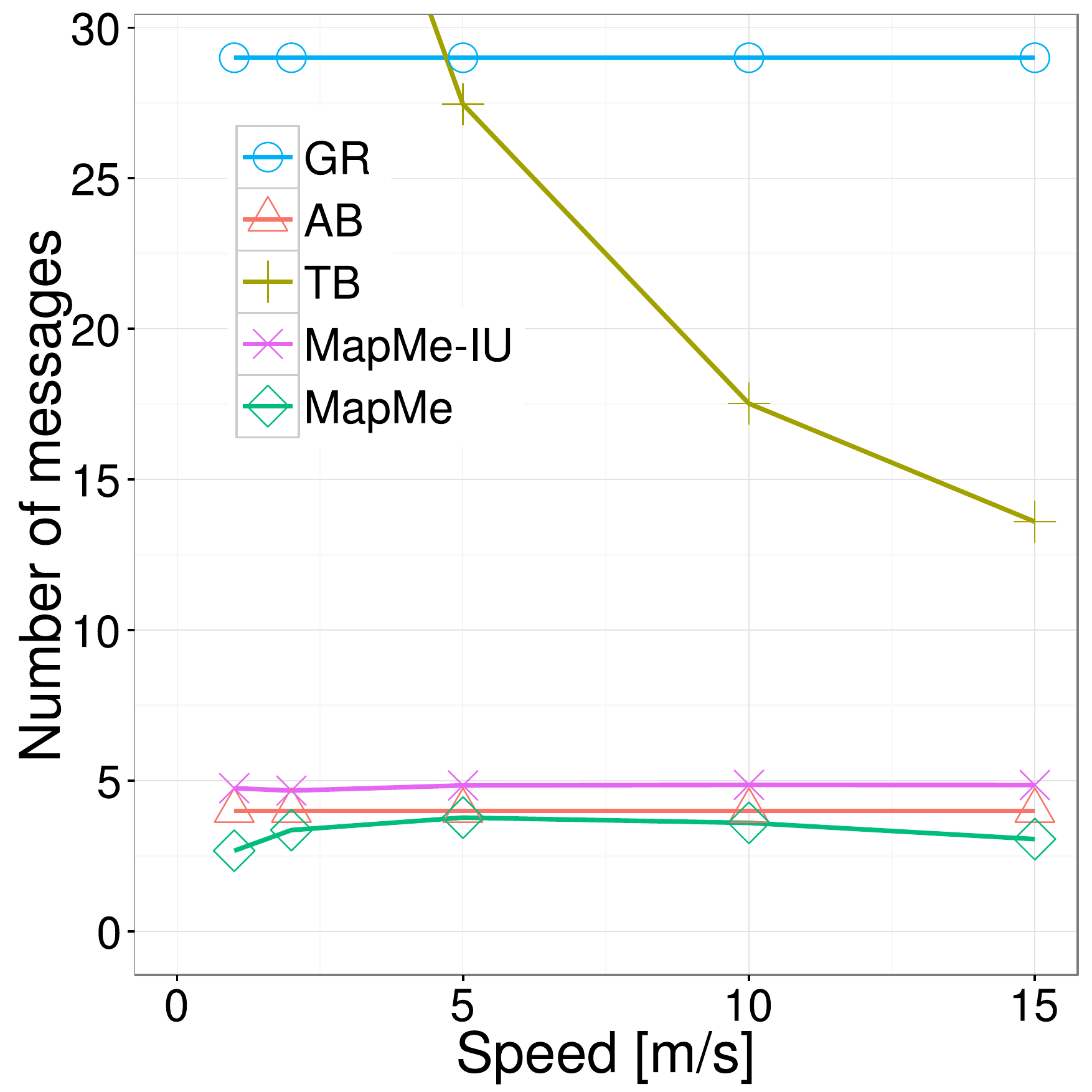}
		\label{fig:overhead}
	}
	\hfill
	\subfigure[] {
		\includegraphics[width=0.45\textwidth]{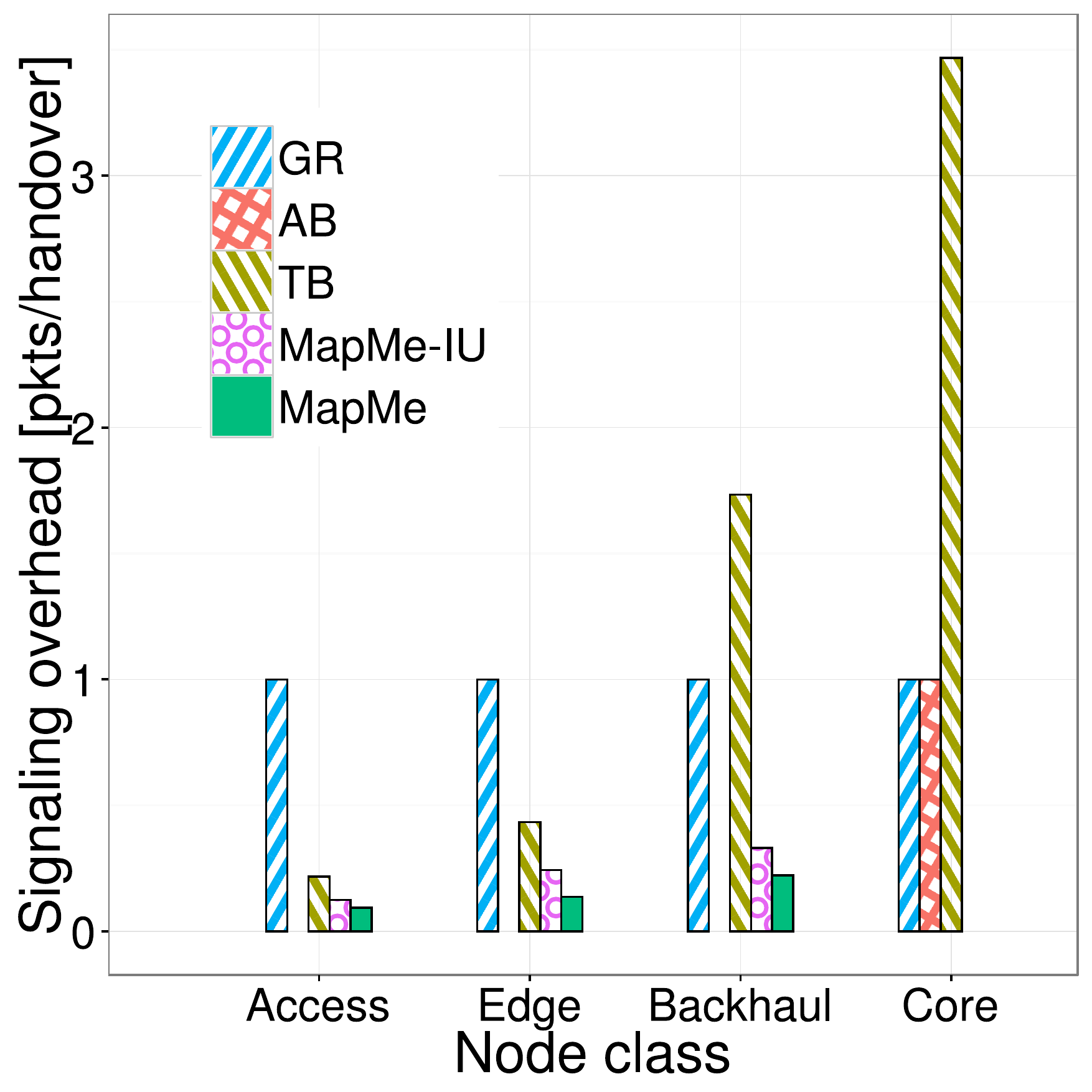}	
		\label{fig:ratio}
	}
	\hfill
	\subfigure[] {
		\includegraphics[width=0.45\textwidth]{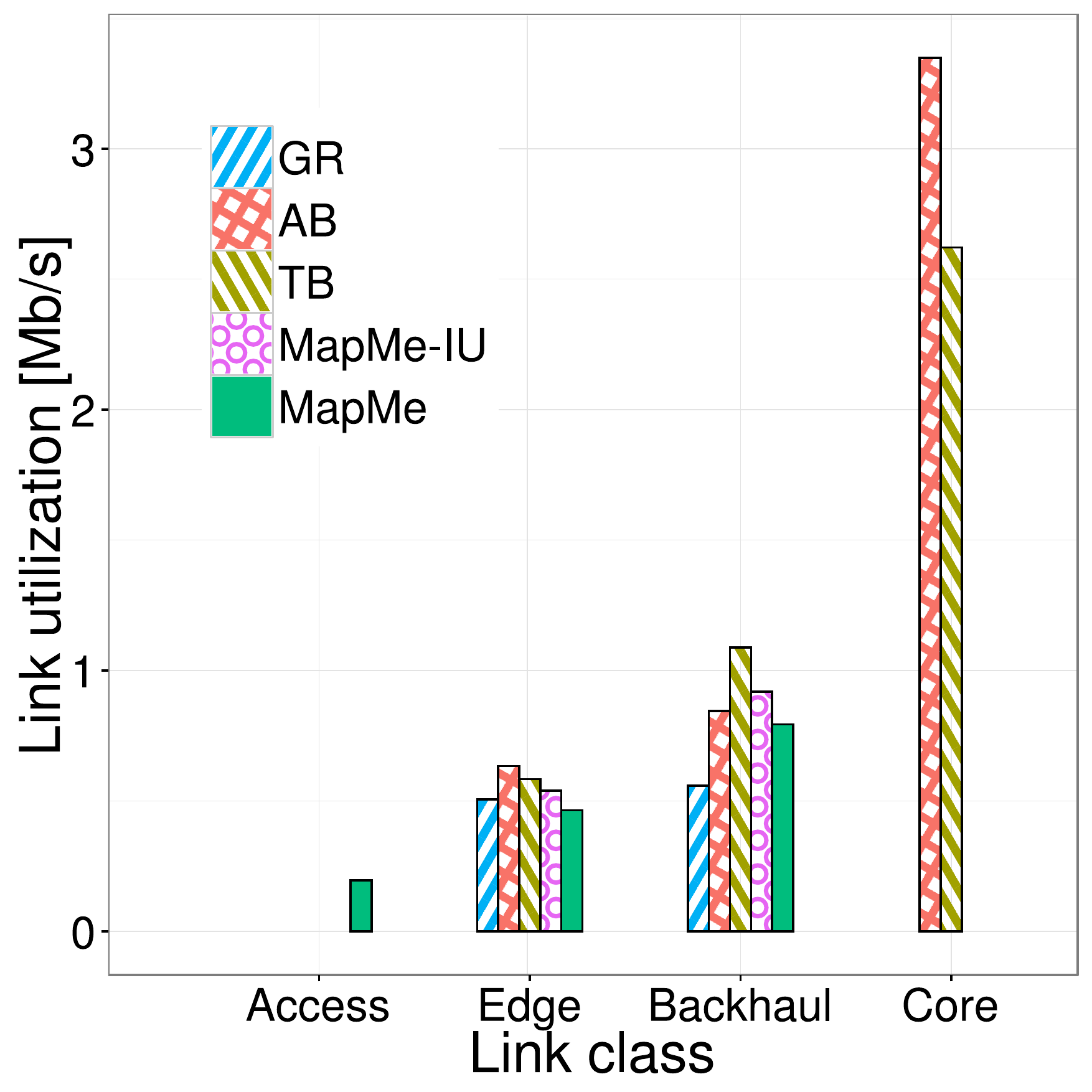}
		\label{fig:utilization}
	}
	\hfill
	\subfigure[] {
		\includegraphics[width=0.45\textwidth]{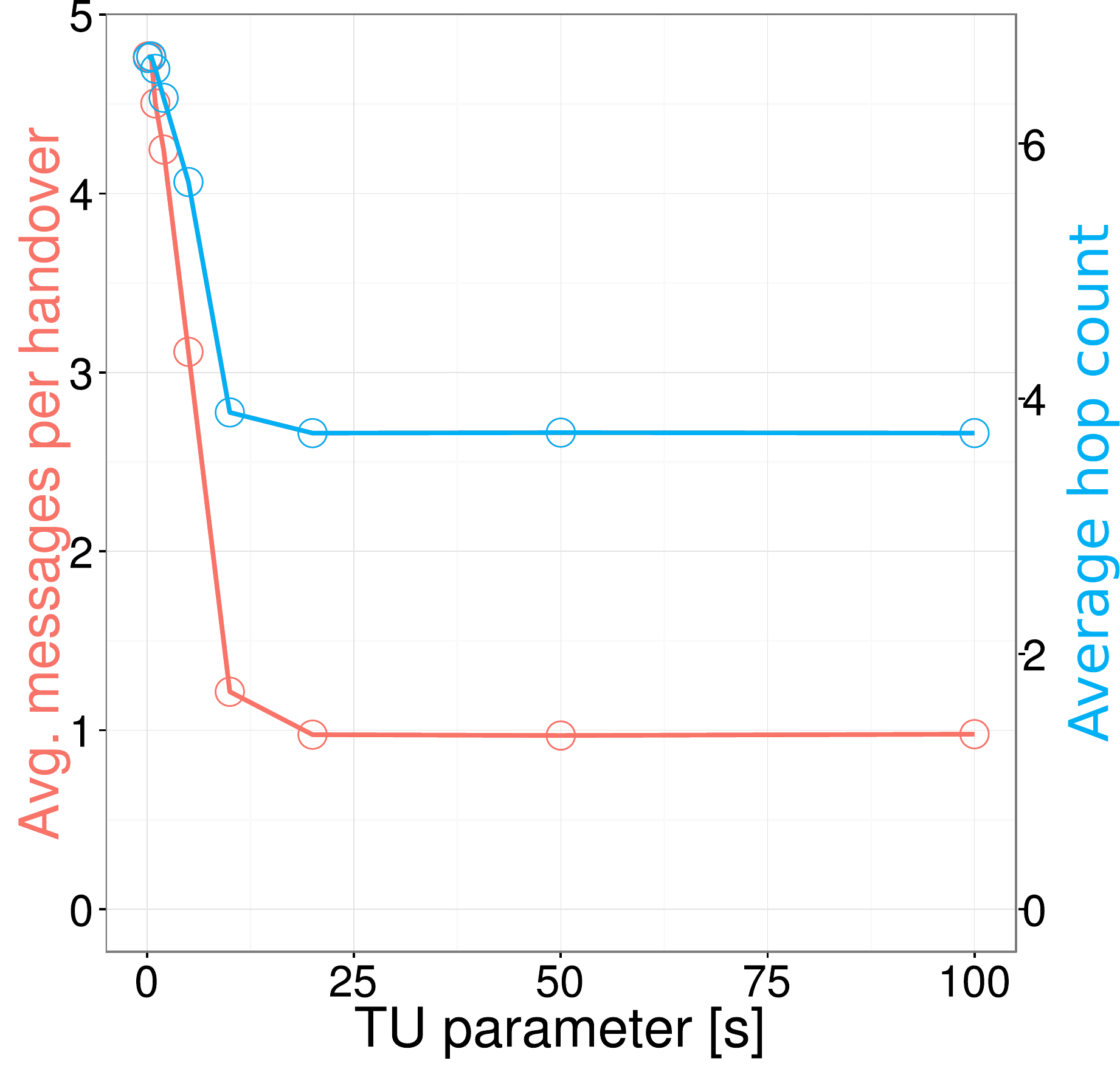}
		\label{fig:tu}
	}
	\vspace{-2mm}
	\caption{Network cost: Signaling overhead vs mobile speed (a); overhead (b) and link utilization (c) per router class. Map-Me sensitivity analysis (d).}
	\label{fig:performance-mapme}
	\vspace{-2mm}
\end{figure*}
For sake of completeness, tracing-based (TB) is obtained by enabling in Kite (\cite{zhang2014kite}) all optional optimizations
that we have found to be mandatory, in order to let the protocol work in presence of losses and frequent mobility:  
(i) traces are reissued right after an handover and (ii)  retransmitted if an ACK is not received withing 60ms, 
(iii) Interests from consumers make use of the trace table and the FIB also
(\texttt{traceonly} flag must be unset). In-network retransmission (\emph{pull})
is disabled for fair comparison.
 
\textbf{Topology and mobility}: We consider an IEEE 802.11n access network composed of 36 access points  
arranged in a 6 by 6 cells grid with square-shaped cells of side S=80 meters and connected to the fat-tree 
backhaul network in Fig.~\ref{fig:topology}.  Wired links have all capacity of C=10Mb/s and 5ms delay. 
Mobiles move only in a  4 by 4 cells sub-area to avoid border effects in the evaluations.
For AB/TB, we assume the anchor/home address to be placed at the root of the fat-tree, to achieve the best performance in terms of average path stretch as previously observed. To show both consumer/producer mobility, we assume $N$ disjoint pairs of consumer/producer, all connected to the leaves of the fat-tree. As mobility model, we use Random Way Point (RWP) in Sec.\ref{sec:synthetic} and trace-driven vehicular and pedestrian mobility patterns in Sec.\ref{sec:trace} (see \cite{Navigo}).

\textbf{ICN over IEEE 802.11n}: In this paper we use IEEE 802.11n on 5Gz frequencies, using a single base
channel of 40Mz with short guard intervals (SGI) using a single antenna at access point (AP) and at mobile stations (STAs). The PHY rate adaptation is minstrel (\cite{minstrel}) to set High Throughput (HT) rates, i.e. MCS from $0$ to  $7$
(corresponding to data rates from $15$Mbps to $150$Mbps). 802.11 frame aggregation is also enabled with a maximum A-MSDU size of $7935$ Bytes and A-MPDU maximum size of $64$kB with block acks.  ICN nodes interconnect to neighbors using adjacencies (denoted as faces), 
i.e. unicast bidirectional channels between two nodes established the same broadcast medium, 802.11n, which we assume in infrastructure mode. In such mode, faces are dynamically created and destroyed between STAs and APs following node associations.
When multiple STAs have a face established with the same AP, medium multiple access is managed by 802.11 EDCF which implies transmission latency and bandwidth sharing.  A face between a STA and the AP is characterized by a time varying capacity that
depends on a number of factors like radio conditions, rate selection, medium sharing. In this work we assume a small cell network wifi deployment controlled by a single entity that manages radio planning and 802.11 tuning. More precisely, we consider homogeneous grid cell coverage, with inter AP distance of 80 meters and  APs operating with a maximum power of 40mW (16dBm) for a maximum
radio range of 120 meters in outdoor.  This makes the maximum number of APs sensed by a STA to grow up to nine. 
Wifi STAs move from one cell to another in a fully covered geographic area and use the ysteresis handoff algorithm described in
\cite{Mhatre:2006} to perform handover. The Rice model is used to take into account fast fading with the line of sight.
See \cite{wifi} for more details about 802.11n parametrization.

\subsection{RWP mobility}\label{sec:synthetic}
In this section, we assume RWP mobility with uniform points distributed across a squared mobility area and constant speed. To highlight \name benefits in the support of latency-sensitve traffic, we consider a streaming audio/video application, characterized by a CBR rate of $1$Mbps. In the following we report statistics about user performance and network cost.

\subsubsection{User performance}
In Fig.\ref{fig:loss_rate}-\ref{fig:delay}, we show two performance indicators for latency-sensitive traffic,
average packet loss and delay, in case of $N=5$ consumer/producer pairs as a function of mobile speed (from pedestrian, i.e. $1$m/s or $3$km/h to vehicular, i.e. $15$m/s or $54$km/h).
We can distinguish two kind of losses: due to the wireless medium, occurring no matter the mobility management approach, and increasing with mobile speed, and those due to mobility. The fraction of mobility losses is consistently reduced by \name especially in presence of the notification/discovery mechanism, as a result of in-fly re-routing of Interests towards the new location of the producer, which prevents Interest timeouts. \nameu like TB (or alternative AL solutions) enable re-routing of
Interests only after the interval of time required for an update to complete. A longer time is required for a global routing update, but the resulting path is the shortest path possible, which explains the equivalent performance w.r.t. \nameunosp/TB. AB under-performs because of worse update completion time and path stretch. 
The experienced average packet delay in Fig.\ref{fig:delay} is a consequence of the path stretch of different approaches: high for AB, medium for TB or \nameunosp, low for GR as in Tab.\ref{tab:comparison} in Sec.\ref{sec:relwork}. \name achieves better performance especially at high speed when the discovery/notification mechanism is mostly used in virtue of the shorter $1$-hop forwarding between APs at the access that does not involve upper links in the topology (at the edge level).
As explained, packet losses and delay result from the different average path length associated to each mobility update process, see Fig.\ref{fig:stretch}, and from the L3 handoff latency, i.e. the time required for L3 reconnection after an handover, see Fig.\ref{fig:l3delay}. The L3 handoff latency illustrates the reactivity of the mobility management protocol and
highlights the significant improvement brought by \name which reduces latency to zero.
It is interesting to observe that AB shows a constant latency value of around $30$ms due to update propagation up to the anchor, while for GR, TB and \nameu such latency varies according to the number of routers to be updated, as a function of producer movement in the considered topology. Latency variations can be visualized at the inflection points in the corresponding CDFs in Fig.\ref{fig:l3delay}.

\begin{figure*}[htbp]
	\centering
	\subfigure[]{
		\includegraphics[width=0.45\textwidth]{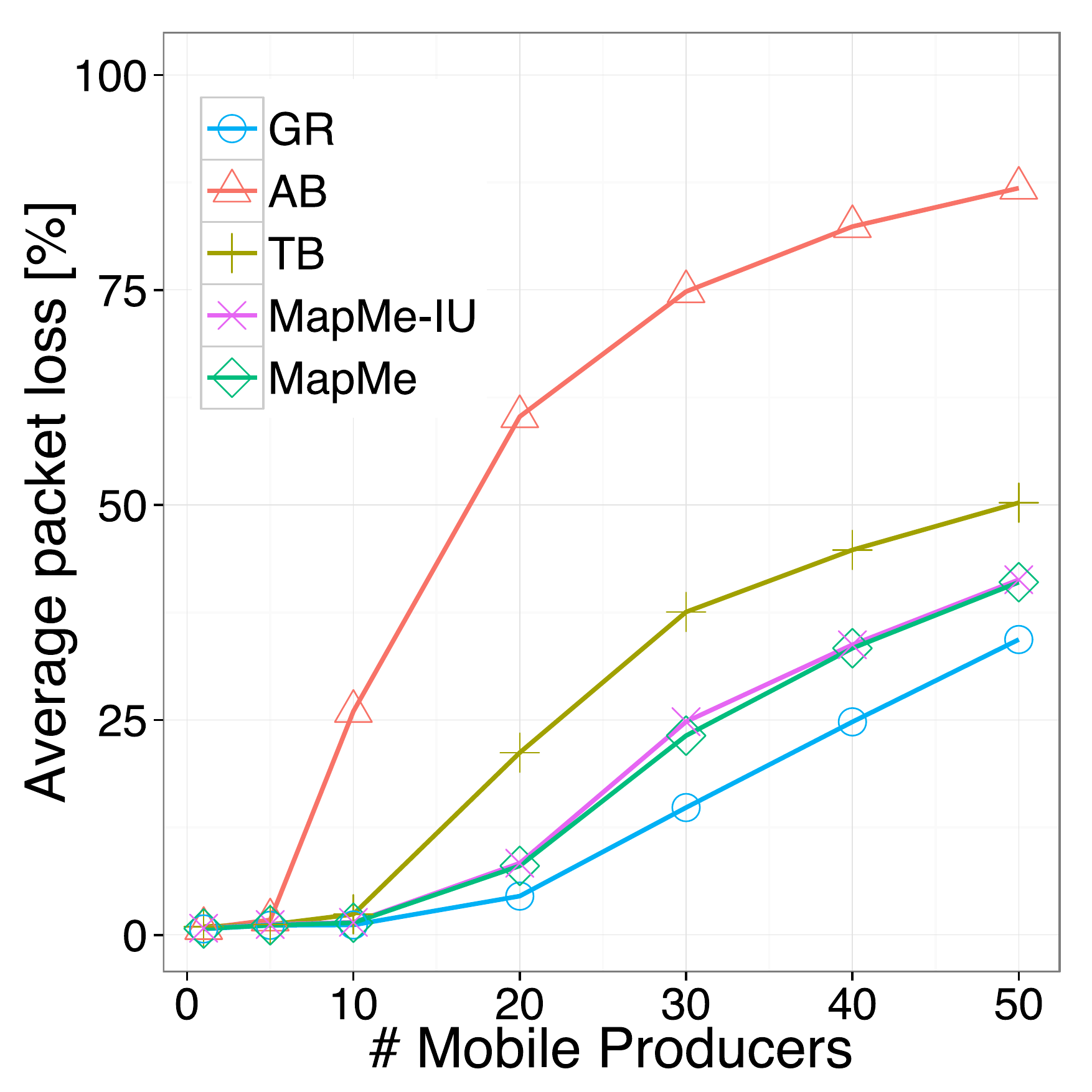}
		\label{fig:wild_pktloss_avg}
	}
	\hfill
	\subfigure[]{
		\includegraphics[width=0.45\textwidth]{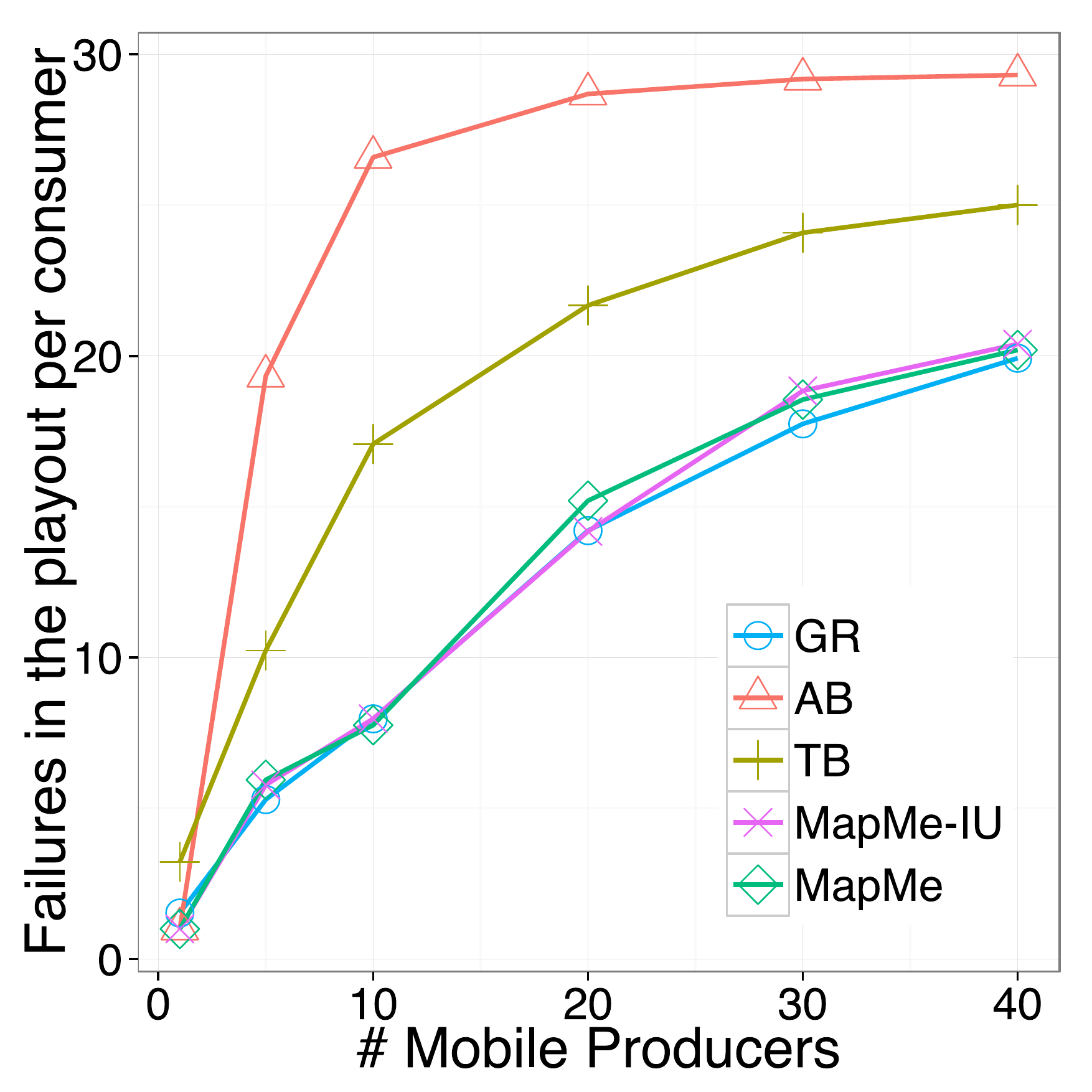}
		\label{fig:wild_periscopeFailures}
	}
	\subfigure[]{
		\includegraphics[width=0.45\textwidth]{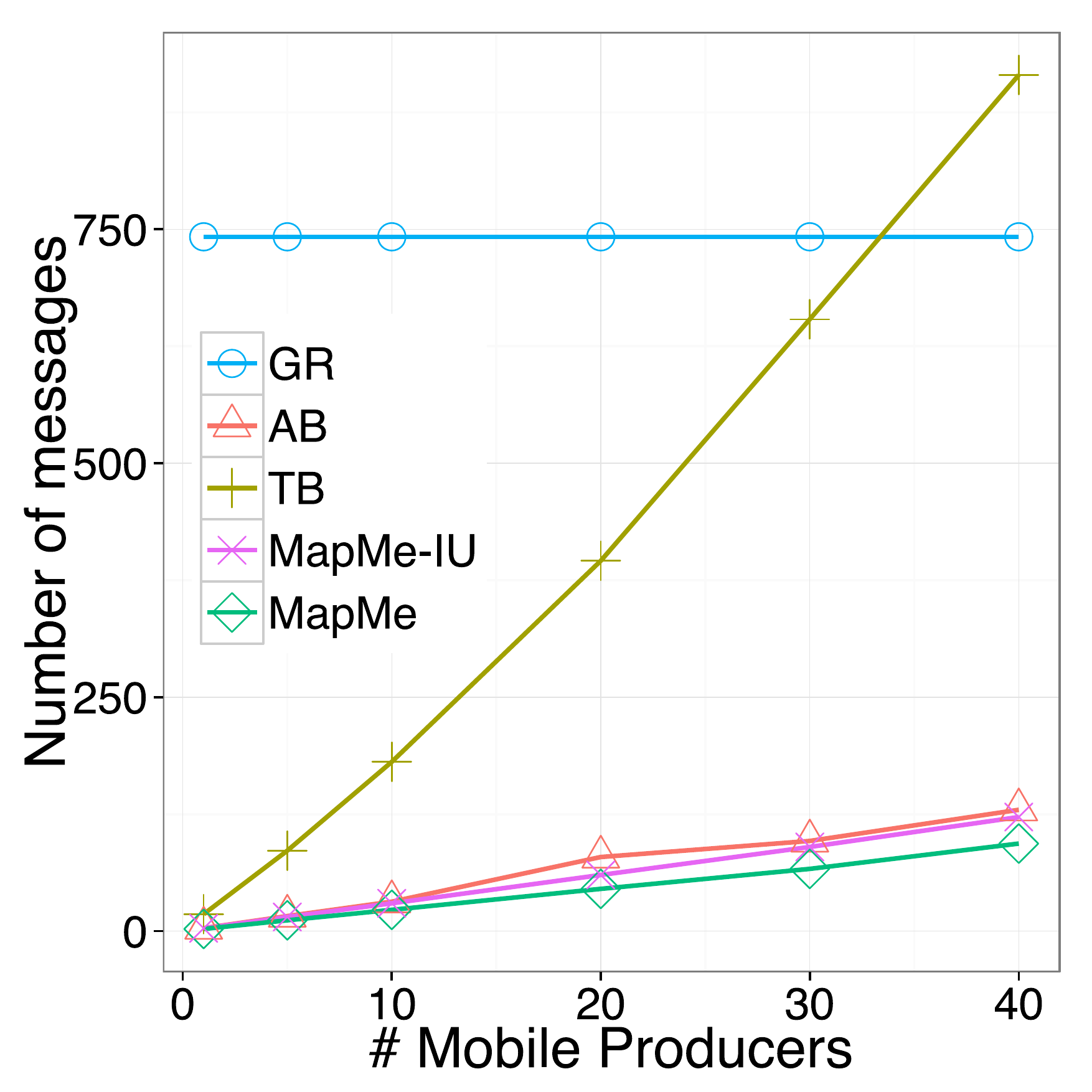}
		\label{fig:overhead-wild}	
	}
	\subfigure[]{
		\includegraphics[width=0.45\textwidth]{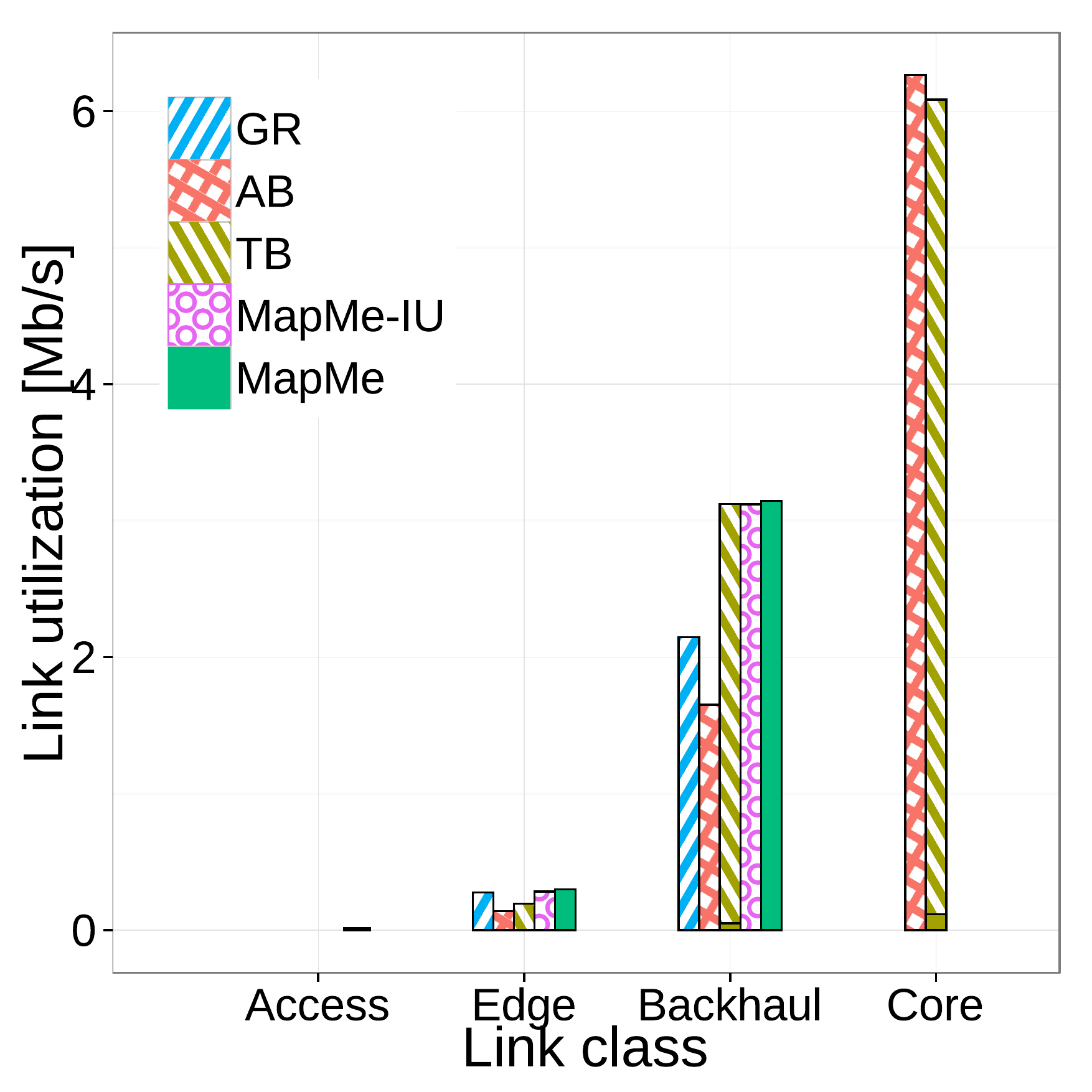}
		\label{fig:wild-link-load}	
	}
	\vspace{-2mm}
	\caption{User performance: CBR average packet loss (a), Periscope playout failures (b). Network Cost: CBR overhead (c) and Periscope link utilization (d).}
	\label{fig:CBRpktLoss-wild}
	\vspace{-3mm}
\end{figure*}
\subsubsection{Network cost}

If user performance is critical to drive mobility management choice,  network cost analysis is equally important for the selection of a cost-effective solution.  To this aim, we compare signaling overhead, meaning the total number of control messages triggered by an handover, in Fig.\ref{fig:overhead}, and the volume of signaling messages per handover to be processed by routers at different positions in the network, in Fig.\ref{fig:ratio}.
More precisely, in the latter case we visualize the distribution over the network of signaling load by distinguishing
the average number of messages per handover received by different classes of routers, based on their position in the network: access, edge, backhaul, core as indicated in  Fig.\ref{fig:topology}.
As expected, the overall number of signaling messages as a function of mobile speed is constant for AB, equal to the number of hops from mobile nodes to the anchor ($4$). Instead, it varies for \name and \nameu according to the also varying average hop count (i.e. path stretch), as already observed in Fig.\ref{fig:stretch}. TB approaches involve a much higher signaling overhead due to trace `keep-alive' messages periodically sent to refresh update information. 
By reporting the way traffic is spread across the network and where signaling traffic goes, we can draw some key observations.
Every mobility protocol relies on the control plane, that enforces a routing
state across the network (shortest-path routing in this paper), which
corresponds to the initialization state for mobility. All protocols relying on
an anchor have routing pointing to the anchor's location, whereas for AL
mechanisms, it points to the producer's position at routing update time. Thus,
AL approaches are able to offload mobile backhaul and core networks from all
local traffic, seamlessly (Fig.\ref{fig:utilization}\footnote{For clarity,
utilization of access link only represents traffic between base stations, excluding
upstream from mobiles.}). Finally, we report about \name sensitivity to
parametrization, i.e. the impact of TU settings. In Fig.\ref{fig:tu} we observe
that \name has robust parametrization as long as TU is not too small (signaling
overhead and path stretch quickly converge to best settings) or too high (load
on access).

\subsection{Trace-driven urban mobility}\label{sec:trace}
\textbf{Topology:} In order to evaluate our approach under more realistic mobility patterns, we consider an urban residential environment spanning a $2.1 \times 2.1$ km$^2$ area in Los Angeles, with the WiFi ot Spot deployment of Time Warner Cable~\cite{TWCwebsite}, i.e. we dislocate $729$ WiFi APs, with the same wireless settings as in previous experiments, connected to the Internet through the fat tree topology in Fig.\ref{fig:topology}.

\textbf{Mobility:} We generate realistic vehicular mobility patterns using SUMO~\cite{SUMO2012}, with maximum car speed set according to road speed limits\footnote{In the selected area we have three different road categories characterized by different speed limits: 40, 70 and 55 km/h.}. We place the mobile producer in a moving car and analyze system dynamics on a given time interval (4 minutes, roughly corresponding to 33 handovers), so that all monitored cars are in the map at the same time.
In such scenario we consider a group communication between one mobile producer and two non-mobile consumers requesting 
different data. Consumers are connected to two APs that are picked at random, uniformly across the network coverage.

\textbf{Applications:} Two type of applications are considered: in the first set of simulations, the previously detailed streaming application is characterized by a rate of 1Mbit/s. In the second set, a pseudo real-time video streaming application, reproducing the popular application Periscope~\cite{Periscope} is used.
The mobile producer generates two different video streams, each one downloaded and played by one consumer, using a 5 seconds play-out delay buffer. If the video play-out stops because the consumer has no Data available, we consider this as a failure and momentarily stop the consumer: after a short period of time (few seconds) the consumer restart downloading new data and to play-out the video. The video data rate is 1Mbit/sec, corresponding to a 480p video resolution. 
Traffic is scaled up by increasing the number of groups, identified by the producer serving data.

\subsubsection{User Performance}
To quantify user experience, we analyze the following metrics: the average packet loss and  user satisfaction, while varying the number of mobile producers in the area (from 1 to 50, each one serving two consumers).\\
\textbf{Packet loss:} We evaluate the distribution of packet losses per second for the CBR application. 
Fig.~\ref{fig:wild_pktloss_avg} shows the average packet loss for \name and the other protocols, while increasing the number of mobile producers in the system. As expected, increasing the number of active users in the network has a negative affect to the performance, because links are getting congested and routers start to lose packets. owever, as shown in Fig.~\ref{fig:wild_pktloss_avg}, the performance of \name and \nameu is close to the ideal GR, while 
TB leads to higher loss rate and with AB we observe a even more rapid increase in packet loss. Indeed, the distributed nature of \name allows the proposed solution to better cope with an increasing number of mobile producers.

\textbf{User Satisfaction:} We evaluate user satisfaction by analyzing the number of failures in the play-out of the video stream for the pseudo real-time video streaming (Periscope-like). Fig.~\ref{fig:wild_periscopeFailures} shows the number of failures in the video play-out that each consumer encounter in $4$ minutes. Like in the CBR case, when the number of mobile producers increases, the performance of the system degrades. Similarly to what observed in CBR case before, AB concentrates
all traffic on a single node, the anchor, thus giving rise to congestion. In contrast, distributed protocols such as \name are able to better distribute traffic over the network and thus better cope with larger number of users. For the same reason, TB performs better than AB, but worse than \namenosp/GR. Indeed, sending traces to the anchor force traffic towards upper layers in the network, preventing substantial traffic offload at the edge.
It is worth noticing that the application runs a classic window-based congestion control with no video rate adaptation and no specific mobility loss recovery mechanisms (all packet losses are recovered based on timer expiration at the consumer). The design of such schemes and the analysis of the interaction with mobility management protocols is out of the scope of this paper and left for future work.

\subsubsection{Network Cost}
Beyond user performance, we evaluate \name in terms of network cost, by computing the overhead and comparing it to 
all other considered solutions. Fig.\ref{fig:overhead-wild} reports the overhead
versus the number of producers, by computing the aggregate  number of messages
exchanged during handover for all producers, whereas
Fig.\ref{fig:wild-link-load} displays link load distribution across the network
(in the case of 10 mobile producers in the map). 
The figures prove that \name successfully offloads the core from local traffic with light overhead, in virtue 
of the anchor-less characteristics of its design. 

\section{Conclusions}\label{sec:conclusions}
Native support for mobility management at network layer is a recognized strength of ICN, and appears to be a key feature to exploit for the design of 5G networks. owever, a comprehensive solution for mobility management in ICN still lacks: previous attempts so far have either tried to apply Mobile IP concepts to ICN or looked at partial aspects of the problem, without providing a thorough evaluation of the initial solutions sketched in ICN context.
The contribution of this paper is twofold. First, we define \namenosp, an anchor-less model for managing intra-AS producer mobility even in presence of latency-sensitive traffic. By design, \name is simple as it only leverages ICN forwarding plane and reactive notifications to the network, lightweight in terms of required signaling messages and, to the best of our knowledge, the first with also proven guarantees of bounded stretch and overall correctness for the forwarding update process. 
Second, we develop a simulation framework on top of NDNSim 2.1 integrating \namenosp, anchor-based and tracing-based approaches and ideal global routing as a reference mobility model, using model-based and trace-driven consumer/producer mobility patterns.
Evaluation takes 802.11n access in small cell outdoor settings and proves how wifi can support mobility using ICN in general
settings.

Reported results show that \name optimally offloads the infrastructure from communications that are local.
All other approaches making use of an anchor, which in practice is also the network gateway, can be optimized
only if all traffic is non local. Instead, current propositions in 3GPP to offload the mobile network core stem from
the observation that, on the contrary, communications are most likely local.
On the other hand, \name would serve non local communications through one 
or multiple gateways without binding mobility feature to any specific location.



\small
\bibliographystyle{abbrv}
\bibliography{bibliography}
\end{document}